\documentclass{article}
\usepackage{graphicx} % Required for inserting images
\usepackage{fullpage}

\newcommand{\alg}{\mathrm{ALG}}

\usepackage[]{color-edits}% 
\addauthor{TE}{magenta}

\newcommand{\SET}[2]{\mathcal{S}(#1,#2)}
\newcommand{\oc}[2]{\#_{#1}(#2)}
\usepackage{amsmath}
\usepackage{amssymb}
\usepackage{mathtools}
\usepackage{amsthm}

\newtheorem{lemma}{Lemma}[section]

\newtheorem{theorem}{Theorem}[section]

\newtheorem{definition}{Definition}[section]

\newcommand{\event}[1]{\mathcal{E}_{#1}}
\newcommand{\pre}{\mathcal{P}}
\newcommand{\E}{\mathbb{E}}
\newcommand{\ps}{\mathcal{V}}
\newcommand{\ber}[1]{{Ber}{(#1)}}
\usepackage{natbib}

\newcommand{\bin}[2]{\mathcal{B}(#1,#2)}
\newcommand{\norm}[2]{\mathcal{N}(#1,#2)}

\title{Prophet Inequality from Samples: Is the More the Merrier?\footnote{T.~Ezra is supported by the Harvard University Center of Mathematical Sciences and Applications.}}
\author{Tomer Ezra\thanks{Harvard University, Cambridge, USA. Email: \texttt{tomer@cmsa.fas.harvard.edu}}}
%\author{Submission 194}
\date{}

\begin{document}

\maketitle
\begin{abstract}

We study a variant of the single-choice prophet inequality problem where the decision-maker does not know the underlying distribution and has only access to a set of samples from the distributions. \citet{rubinstein2020optimal} showed that the optimal competitive-ratio of $\frac{1}{2}$ can surprisingly be obtained by observing a set of $n$ samples, one from each of the distributions. 
In this paper, we prove that this competitive-ratio of $\frac{1}{2}$ becomes unattainable when the decision-maker is provided with a set of more samples. 
We then examine the natural class of ordinal static threshold algorithms, where the algorithm selects the $i$-th highest ranked sample, sets this sample as a static threshold, and then chooses the first value that exceeds this threshold. We show that the best possible algorithm within this class achieves a competitive-ratio of $0.433$.
Along the way, we utilize the tools developed in the paper and provide an alternative proof of the main result of \citet{rubinstein2020optimal}.

\end{abstract}
\thispagestyle{empty} 
\newpage
\pagenumbering{arabic}
\section{Introduction}

In the prophet inequality setting \citep{krengel1977semiamarts,krengel1978semiamarts} a decision-maker needs to select a reward among a sequence of rewards in an online manner. The rewards are stochastic and are distributed according to known distributions. The performance of the decision-maker is measured against the offline optimum (or a ``prophet'' that can see into the future, and always selects the maximum reward). 

Traditionally, it is assumed that the distributions of the random variables are known to the decision-maker. This assumption, while convenient for theoretical analysis, raises practical concerns: Where do these priors come from, and how realistic is it to assume that they are known with precision? In many real-world scenarios, exact knowledge of the underlying distributions is not available, and instead, decision-makers must rely on samples drawn from these distributions. For instance, a store seller might document the selling prices of items without keeping detailed records of which customers made the purchases at those prices. Additionally, due to regulatory requirements, market platforms may be limited to retaining anonymized statistical data on past transactions, without associating specific prices with individual customers.

\citet{azar2014prophet} proposed addressing this challenge by considering the prophet inequality in a setting where only samples, rather than complete knowledge of the distributions, are available. This approach by \citet{azar2014prophet} sparked a line of research that explores the prophet inequality framework under various conditions where the online algorithm has access only to samples \citep{azar2014prophet,rubinstein2020optimal,dutting2021prophet,caramanis2022single,correa2022two,gravin2022optimal,kaplan2022online,cristi2024prophet,correa2024sample,fu2024sample}.

Among this line of works, \citet{rubinstein2020optimal} considered the classical setting of single-choice prophet inequality from samples and showed that it is possible to achieve a competitive-ratio of $\frac{1}{2}$ even when only a single-sample is provided from each distribution. Moreover, they demonstrated that this can be accomplished without knowing which sample is associated with each distribution.

This naturally leads to the question of whether the result extends beyond the single-sample case, which motivates our study: 
    \emph{Can an algorithm that receives a set of samples, with $ k $ samples from each distribution, 
 guarantee the optimal competitive-ratio of $\frac{1}{2}$?}

\subsection{Our Results and Techniques}
Our first main result is answering this question in the negative. 
\paragraph{Main Result 1 (Section~\ref{sec:main}):} \textit{There exists a constant $c < 1/2 $ such that no online algorithm receiving $k\cdot n $ samples can guarantee a competitive-ratio greater than $ c$. }

Our approach involves constructing a family of instances where any online algorithm will fail to achieve the desired competitive-ratio for at least one instance within the family. The instances we construct have only a constant number of rewards, with a constant number of values in their support. This allows us to characterize the behavior of an online algorithm as a function that maps any realized sample set and any prefix of values to the probability of accepting the last observed value.

Using this characterization, we establish that for two different instances, the difference in the probability of selecting the last reward when observing the same prefix can be bounded by the distance between the distributions over the sample sets. Moreover, this holds true even for convex combinations of such instances. Specifically, if the distribution over the sample set induced by one instance is close to a convex combination of distributions from other instances, then the algorithm should behave similarly to the convex combination of the behaviors of the algorithm in those instances. This insight allows us to construct several instances where, if the algorithm performs well on all of them, it must necessarily perform poorly on carefully constructed convex combinations of these instances.

To complement the hardness result, we analyze a specific class of algorithms we term as ordinal static threshold algorithms. These algorithms select a static threshold of a predefined rank among the samples, independent of the samples' values, and set it as a threshold. Then, when observing the online rewards, they select the first reward exceeding the threshold. 

\paragraph{Main Result 2 (Section~\ref{sec:ordinal}):} \textit{There exists an ordinal static threshold that achieves a competitive-ratio of approximately $0.433$, and this is the optimal competitive-ratio achievable by any ordinal static threshold algorithm.}

Our analysis reveals several structural properties of threshold algorithms, particularly that a threshold algorithm with a threshold of $T$ guarantees at least the minimum of $F(T)$ and $1-F(T)$, where $ F $ is the CDF of the maximum reward. 
Moreover, for a threshold $T$, if the expected number of rewards exceeding the threshold is $\delta$, then $ F(T) $ must lie within the interval $[1 - \delta, e^{-\delta}]$. Carefully combining these two observations with a concentration bound implies our guarantee. 
The optimality is followed by considering two instances depending on the threshold's ranking. 
If the ranking of the chosen threshold is too high, then the algorithm might miss a hidden high reward that appears with a very low probability. 
If the ranking is too low, then the probability of selecting a reward at all is too low when all rewards' values are approximately the same.

\paragraph{Simplified Analysis for the Single-Sample Case.} Our last result (Section~\ref{sec:simpler}) is a simpler proof of the main result of \citet{rubinstein2020optimal} where we show that for the single-sample case, the threshold algorithm that sets the maximum sample as a threshold achieves a competitive-ratio of $\frac{1}{2}$. We show this result by establishing that for each value $x$, the probability that the algorithm selects at least $x$ is at least half of the probability that the maximum is at least $x$. Thus, this algorithm not only achieves half of the expectation of the prophet, but also, the distribution of this algorithm $\frac{1}{2}$-stochastically dominates the distribution of the prophet. Our proof relies on establishing two lower bounds that compare the probability of selecting at least $x$ to the probability that the maximum is at least $x$.
We then integrate over our constructed lower bounds and derive the approximation of half. 

\subsection{Related Work}
The study of prophet inequalities with limited prior information was initiated by \citet{azar2014prophet}, who demonstrated that a single-sample from each distribution is sufficient to achieve an asymptotically optimal competitive-ratio for the $k$-capacity prophet inequality. 
\citet{rubinstein2020optimal} focused on the single-choice prophet inequality and 
showed that the optimal competitive-ratio of $\frac{1}{2}$ can be attained using just 
one sample per distribution. Additionally, they showed that when the rewards' values are drawn from 
identical distributions, $O(\frac{n}{\epsilon^6})$ samples suffice to achieve a loss of 
only an $\epsilon$-fraction compared to the best online algorithm with full 
distributional information. 
\citet{dutting2021prophet} explored the single-sample prophet inequality under a 
matching feasibility constraint and established a competitive-ratio of $\frac{1}{16}$. 
\citet{caramanis2022single} extended the single-sample prophet inequality framework 
to various settings, including different types of matroids and matching problems under 
diverse arrival models. 
\citet{gravin2022optimal} considered the prophet inequality with fewer than one sample 
per distribution, where each distribution is sampled with probability $p$, and 
demonstrated that the optimal competitive-ratio achievable in this setting is 
$\frac{p}{p+1}$. 
\citet{correa2024sample} improved the sample complexity results of \citet{rubinstein2020optimal} 
for the i.i.d. prophet inequality, showing that $\frac{n}{\epsilon}$ samples are sufficient to achieve a loss of an $\epsilon$-fraction. 
Recently, \citet{cristi2024prophet} introduced a unified framework showing that $O\left(\frac{1}{\epsilon^5}\right)$ of samples from all distributions is sufficient to achieve the optimal competitive-ratio (up to a factor of $\epsilon$) for three variants of the prophet inequality: 
prophet secretary (where the rewards arrive in random order), free order (where the 
decision-maker can choose the order of the observed rewards), and the i.i.d. case. 
\citet{fu2024sample} examined the prophet inequality under matroid feasibility constraints 
and proved that a polylogarithmic number of samples per distribution is enough to achieve 
a constant competitive-ratio.

Our work also connects to the concept of identity-blind algorithms introduced by 
\citet{ezra2024choosing}. An identity-blind algorithm is an algorithm that does not know which 
distribution each value is drawn from. In this paper, we consider algorithms that are 
identity-blind with respect to the identities of the samples.

\section{Preliminaries}
We consider a setting with $n$ boxes. Every box $i$ contains some value $v_i$ drawn from an underlying independent distribution $F_i$. We denote the product distribution $F = F_1\times \ldots \times F_n$.
The values are revealed sequentially in an online fashion and the decision-maker needs to select one of them in an immediate an irrevocable manner without observing future values.
We consider algorithms that do not know the distributions $F_i$'s, but instead will have access to a pool of samples from them. 
We assume that prior to the online decision-making process where $v_i$ are revealed to the algorithm, a multi-set of size $k\cdot n$ samples is given to the algorithm where the multi-set is composed of $k$ samples from each distribution. The algorithm is not aware of which sample is drawn from which distribution. 
We denote by $\SET{F}{k}$ the distribution of the multi-set given to the algorithm where $F$ is the product distribution and $k$ is the number of samples from each $F_i$. For a value $x $ in the supports of the distributions $F_1,\ldots,F_n$, and a multi-set $S$ we denote by $\oc{x}{S}$ the number of occurrences of $x$ in $S$. 

We denote an online algorithm by $\alg$, and given a multi-set $S$ of samples and a vector of values $v$, we denote by $\alg(S,v)$ the (possibly random) value chosen by the algorithm $\alg$ on $v$ and $S$. 

For a parameter $k$, we measure the performance of an algorithm $\alg$ that receives a pool of $k$ samples from each distribution by the competitive-ratio measure which is $$ \alpha_k(\alg)  := \inf_{F} \frac{\E_{S \sim \SET{F}{k} }\left[\E_{v\sim F} [\alg(S,v)]\right]}{\E_{v\sim F} [\max_i v_i]} ,$$
where the expectation in the numerator is also taken over the randomness of the algorithm.

When proving lower bounds on $\alpha_k$, we assume for simplicity that all distributions are continuous, therefore there are no ties between different values and samples. This assumption is made without loss of generality by employing standard techniques, such as using randomized tie-breaking when values or samples are identical (see e.g., \citep{ezra2018prophets,rubinstein2020optimal}).

\paragraph{Probability definitions and tools.}
Our proofs rely on the following metric of distributions:
\begin{definition}[Total Variation Distance]
For a measurable space $(\Omega, \mathcal{F})$, and two probability measures $D_1,D_2$ defined on $(\Omega, \mathcal{F})$, the total variation distance between $D_1$ and $D_2$ is defined by $$ \delta_{TV}(D_1,D_2) := \sup_{A\in\mathcal{F}} |D_1(A)-D_2(A)|.$$
\end{definition}
We use $\bin{n}{p}$ to denote the binomial distribution with parameters $n,p$, we use $\norm{\mu}{\sigma^2}$ to denote the normal distribution with an expectation of $\mu$, and variance $\sigma^2$, we use $\ber{p}$ to denote the Bernoulli distribution with a parameter $p$, and we use $U(a,b)$ to denote the uniform distribution on the interval $[a,b]$.
In particular, we use the following properties of total variation distance:
\begin{theorem}[\citep{gavalakis2024entropy}]\label{thm:discrete}
Fixing $p\in(0,1)$, and taking $n$ goes to infinity, the total variation distance between binomial with parameters $n,p$, and a normal distribution with the same mean and variance goes to $0$. Formally, $$\lim_{n\rightarrow\infty} \delta_{TV}\left(\bin{n}{p},\norm{np}{np(1-p)}\right) =0 ,$$
where  
  $$\delta_{TV}\left(\bin{n}{p},\norm{np}{np(1-p)}\right) = \sum_{i=-\infty }^ \infty |\Pr_{X\sim \bin{n}{p}}[X=i] - \Pr_{Y\sim \norm{np}{np(1-p)}}[i-0.5\leq Y\leq i+0.5]  |.$$
\end{theorem}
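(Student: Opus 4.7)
The plan is to establish a local central limit theorem for the binomial via Stirling's approximation, then compare term-by-term against the interval integrals of the Gaussian density. Let $b(k) := \Pr_{X\sim\bin{n}{p}}[X=k]$ and $g(k) := \Pr_{Y\sim\norm{\mu}{\sigma^2}}[k-\tfrac{1}{2}\leq Y\leq k+\tfrac{1}{2}]$ with $\mu = np$ and $\sigma^2 = np(1-p)$. I would split the sum $\sum_k |b(k)-g(k)|$ into a \emph{bulk} $B_n := \{k : |k-\mu| \leq \sigma\,\omega(n)\}$ and its complement, where $\omega(n) \to \infty$ is chosen to grow slowly with $n$ (e.g.\ $\omega(n) = n^{1/10}$).

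For the tail, a Chernoff/Hoeffding bound for the binomial gives $\Pr[X\notin B_n] = o(1)$, while a standard Gaussian tail bound gives $\Pr[Y\notin[\mu-\sigma\omega(n),\mu+\sigma\omega(n)]] = o(1)$, so both contributions to the TV sum vanish as $n\to\infty$. For the bulk, I would apply Stirling's formula $\log m! = m\log m - m + \tfrac{1}{2}\log(2\pi m) + O(1/m)$ to each factorial in $\binom{n}{k}$ and Taylor-expand $\log b(k)$ around $k = \mu$. Since both $k$ and $n-k$ are $\Theta(n)$ uniformly over $B_n$, this yields
\[
b(k) = \frac{1}{\sigma\sqrt{2\pi}}\,\exp\!\left(-\frac{(k-\mu)^2}{2\sigma^2}\right)\cdot\bigl(1+o(1)\bigr)
\]
uniformly in $k\in B_n$. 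Separately, applying the mean value theorem to the Gaussian density $\varphi$ on $[k-\tfrac{1}{2}, k+\tfrac{1}{2}]$ gives $g(k) = \varphi(k)\cdot(1+o(1))$ uniformly on $B_n$, since the logarithmic derivative of $\varphi$ at such $k$ is $O(\omega(n)/\sigma) = o(1)$. Hence $|b(k)-g(k)| = o(1)\cdot g(k)$ uniformly on $B_n$, and summing over $B_n$ contributes $o(1)$ to the TV sum.

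The main obstacle is the careful calibration of the rate $\omega(n)$: the cubic Taylor remainder in the Stirling expansion of $\log b(k)$ scales like $(k-\mu)^3/n^2$, so the per-term relative error is of order $\omega(n)^3/\sqrt{n}$, which demands $\omega(n)\to\infty$ slowly; yet one also needs $\omega(n)$ large enough that the Chernoff tail $e^{-\Omega(\omega(n)^2)}$ is negligible. Any intermediate rate such as $\omega(n)=n^{1/10}$ satisfies both requirements, and combining the bulk and tail estimates yields the claim.
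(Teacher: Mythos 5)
Your proposal is essentially correct, but note that the paper itself does not prove this statement at all: it is quoted as an external result from \citet{gavalakis2024entropy} and used as a black box (in Lemma~\ref{lem:dd}). What you have written is the classical de Moivre--Laplace local limit argument, and it does establish exactly the discretized total-variation statement the paper uses. Your calibration checks out: with $|k-\mu|\leq\sigma\,\omega(n)$ and $\omega(n)=n^{1/10}$, the cubic remainder in the expansion of $\log b(k)$ is of order $(\sigma\omega(n))^3/\sigma^4=\omega(n)^3/\Theta(\sqrt{n})=n^{-1/5}\to 0$, while the binomial and Gaussian tail masses outside the bulk are $e^{-\Omega(\omega(n)^2)}=o(1)$, so both pieces of $\sum_k|b(k)-g(k)|$ vanish. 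Two small points you should make explicit to close the argument cleanly: (i) the bulk estimate compares both $b(k)$ and $g(k)$ to the density value $\varphi(k)$ (including the prefactor correction $\sqrt{p(1-p)/(x(1-x))}=1+O(\omega(n)/\sqrt{n})$ with $x=k/n$), and then uses that $\sum_{k\in B_n}\varphi(k)=O(1)$ so that uniform relative errors of size $o(1)$ sum to $o(1)$; and (ii) the tail of the TV sum is bounded by $\Pr[X\notin B_n]+\Pr[Y\notin[\mu-\sigma\omega(n)-\tfrac12,\mu+\sigma\omega(n)+\tfrac12]]$, which also covers the integers $i<0$ and $i>n$ appearing in the doubly infinite sum. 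Your route is more elementary and self-contained than the paper's citation; the cited reference in fact gives the stronger quantitative rate $O(1/\sqrt{n})$, but the qualitative $o(1)$ convergence you prove is all that the paper's application requires.
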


We also note the following bound on the total variation distance of two normal distributions:
\begin{theorem}[\citep{devroye2018total}]\label{thm:distance}
For two normal distributions with the same mean, it holds that:
$$
    \delta_{TV}\left(\norm{\mu}{ \sigma_1^2}, \norm{\mu}{ \sigma_2^2}\right) \leq O\left( \left|\frac{\sigma_1^2}{\sigma_2^2}-1\right|\right).
$$
\end{theorem}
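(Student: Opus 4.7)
The plan is to derive the stated bound from Pinsker's inequality combined with the closed-form Kullback--Leibler divergence between two centered Gaussians, and a Taylor expansion around $\sigma_1^2 = \sigma_2^2$. First I would normalize: by translation invariance of total variation we may take $\mu = 0$, and by the measure-preserving rescaling $x \mapsto x/\sigma_2$ we may further take $\sigma_2^2 = 1$. Setting $\epsilon := \sigma_1^2/\sigma_2^2 - 1$, the goal reduces to showing $\delta_{TV}(\norm{0}{1+\epsilon}, \norm{0}{1}) = O(|\epsilon|)$.

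The main step is to combine Pinsker's inequality $\delta_{TV}(P,Q) \leq \sqrt{\tfrac{1}{2}\, D_{\mathrm{KL}}(P \Vert Q)}$ with the standard formula $D_{\mathrm{KL}}(\norm{0}{a} \Vert \norm{0}{b}) = \tfrac{1}{2}\bigl(\tfrac{a}{b} - 1 - \ln(\tfrac{a}{b})\bigr)$, which one derives by writing out the Gaussian densities and using $\mathbb{E}_{P}[x^2] = a$. Plugging in $a = 1+\epsilon$ and $b = 1$ gives $D_{\mathrm{KL}} = \tfrac{1}{2}\bigl(\epsilon - \ln(1+\epsilon)\bigr)$. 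A Taylor expansion $\ln(1+\epsilon) = \epsilon - \epsilon^2/2 + O(\epsilon^3)$ yields $\epsilon - \ln(1+\epsilon) = \epsilon^2/2 + O(\epsilon^3)$, so $D_{\mathrm{KL}} = O(\epsilon^2)$ on any bounded neighborhood of $0$ and hence $\delta_{TV} = O(|\epsilon|)$ there via Pinsker.

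It remains to handle the complementary regime $|\epsilon| \geq 1/2$, where the bound is immediate since $\delta_{TV} \leq 1 \leq 2|\epsilon|$. The two regimes together cover all admissible $\epsilon \in (-1, \infty)$, and picking the larger of the two implicit constants yields a single uniform $O(|\epsilon|)$ bound.

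This is essentially a textbook fact, so I do not expect a serious obstacle. The only technical point is verifying that the Taylor remainder is uniform enough to absorb into a single constant on the chosen bounded interval around $0$, which follows from the convexity of $\epsilon \mapsto \epsilon - \ln(1+\epsilon)$ and a Lagrange form of the remainder. A more hands-on alternative would be to apply Scheff\'e's identity $\delta_{TV}(P,Q) = \tfrac{1}{2}\int |p(x) - q(x)|\,dx$ directly to the two Gaussian densities, locate the two symmetric points where $p(x) = q(x)$ (they satisfy an explicit equation in $\epsilon$), and evaluate the resulting differences of Gaussian CDFs; this bypasses the KL detour but is computationally heavier than the Pinsker route and offers no qualitative improvement.
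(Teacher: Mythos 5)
Your argument is correct. Note, however, that the paper does not prove this statement at all: it is quoted as a black-box theorem from the cited work of Devroye, Mehrabian, and Reddad, so there is no internal proof to match against. Your derivation is a legitimate self-contained substitute for the one-dimensional case actually used here: the reduction to $\mu=0$, $\sigma_2^2=1$ by translation and scaling invariance of total variation is valid, the formula $D_{\mathrm{KL}}\bigl(\norm{0}{a}\,\Vert\,\norm{0}{b}\bigr)=\tfrac{1}{2}\bigl(\tfrac{a}{b}-1-\ln\tfrac{a}{b}\bigr)$ is the standard one, and the two-regime split is sound. For the small-$\epsilon$ regime the uniformity you worry about is immediate from the Lagrange remainder: with $g(\epsilon)=\epsilon-\ln(1+\epsilon)$ one has $g''(\xi)=(1+\xi)^{-2}\leq 4$ for $\xi\geq -\tfrac{1}{2}$, so $g(\epsilon)\leq 2\epsilon^2$ on $[-\tfrac12,\tfrac12]$, giving $\delta_{TV}\leq\sqrt{\tfrac12 D_{\mathrm{KL}}}\leq |\epsilon|/\sqrt{2}$ there, while $\delta_{TV}\leq 1\leq 2|\epsilon|$ otherwise; taking the constant $2$ covers all $\epsilon\in(-1,\infty)$. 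The difference from the cited source is mainly one of precision: the original reference obtains sharp two-sided estimates of the Gaussian total variation distance by working directly with the densities (essentially your Scheff\'e alternative), with explicit constants, whereas the Pinsker route gives only an upper bound with a lossier constant --- but an upper bound of order $|\sigma_1^2/\sigma_2^2-1|$ is all that Lemma~\ref{lem:dd} needs, since the variance ratio there is $1\pm o(1)$.
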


We also use the following version of the Chernoff  bound:
\begin{theorem}[Chernoff  Bound]
For every $\delta \in (0,1)$, and $X_1,\ldots,X_n$  independent Bernoulli random variables where $X=\sum_i X_i$, with $\mu= \E[X]$ it holds that
   $$ \Pr[|X -\mu|\geq \delta \mu ] \leq 2e^{-\delta^2\mu/3}.$$
\end{theorem}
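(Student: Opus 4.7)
The plan is to prove this two-sided concentration inequality via the classical Chernoff technique: bound the moment generating function of $X$, apply Markov's inequality in its exponentiated form, optimize the free parameter, and finally union-bound the two one-sided tails.

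First I would handle the upper tail $\Pr[X \geq (1+\delta)\mu]$. For each Bernoulli $X_i$ with mean $p_i$, the moment generating function satisfies $\E[e^{tX_i}] = 1 + p_i(e^t - 1) \leq \exp\bigl(p_i(e^t - 1)\bigr)$ by the inequality $1+x \leq e^x$. Independence then gives $\E[e^{tX}] = \prod_i \E[e^{tX_i}] \leq \exp\bigl(\mu(e^t - 1)\bigr)$. Applying Markov's inequality to the nonnegative random variable $e^{tX}$ with the choice $t = \ln(1+\delta) > 0$ yields $\Pr[X \geq (1+\delta)\mu] \leq \bigl(e^{\delta}/(1+\delta)^{1+\delta}\bigr)^\mu$, which in turn equals $\exp\bigl(-\mu\,[(1+\delta)\ln(1+\delta) - \delta]\bigr)$. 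A standard calculus lemma asserts that for $\delta \in (0,1)$ one has $(1+\delta)\ln(1+\delta) - \delta \geq \delta^2/3$, delivering the desired upper-tail bound $e^{-\delta^2\mu/3}$.

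Next, I would run the symmetric argument for the lower tail $\Pr[X \leq (1-\delta)\mu]$, using $t < 0$ and the analogous parameter choice $t = \ln(1-\delta)$. This yields the (strictly stronger) bound $e^{-\delta^2\mu/2}$, which is in particular at most $e^{-\delta^2\mu/3}$. Summing the two one-sided tail probabilities via a union bound then gives $\Pr[|X - \mu| \geq \delta \mu] \leq 2e^{-\delta^2\mu/3}$, as claimed.

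The only genuinely non-routine step is the elementary inequality $(1+\delta)\ln(1+\delta) - \delta \geq \delta^2/3$ on $(0,1)$, which one verifies by noting that both sides vanish at $\delta = 0$ and comparing derivatives (or by Taylor expansion with an explicit remainder term). This is the sole place where the specific constant $1/3$ in the exponent is extracted; everything else is mechanical MGF manipulation. Since the result is standard, in the paper itself one would simply cite it rather than reproduce this argument.
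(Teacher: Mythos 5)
Your proof is correct: the MGF bound $\E[e^{tX}]\leq\exp\bigl(\mu(e^t-1)\bigr)$, the choice $t=\ln(1+\delta)$ (resp.\ $t=\ln(1-\delta)$), the elementary inequality $(1+\delta)\ln(1+\delta)-\delta\geq\delta^2/3$ on $(0,1)$, and the final union bound together give exactly the stated two-sided estimate. The paper does not prove this statement at all—it is quoted in the preliminaries as a standard tool—so there is nothing to compare against; your argument is the standard Chernoff derivation one would cite for it.
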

Lastly, we are comparing different distributions through the notion of approximate stochastic domination.
\begin{definition}[Stochastic Domination]
For two distributions over $\mathbb{R}_{\geq 0}$ and a parameter $\gamma \in [0,1]$ we say that $D_1$ $\gamma$-stochastically dominates the distribution $D_2$ if for every  $x\in \mathbb{R}_{\geq 0}$ it holds that $$\Pr_{X \sim D_1}[ X\geq x] \geq \gamma \cdot \Pr_{X \sim D_2}[ X\geq x] .$$
\end{definition}
In particular, in some of our proofs, we show that the distribution of the chosen value of an algorithm $\gamma$-stochatically dominates the distribution of the prophet which implies a competitive-ratio of $\gamma$ since 
\begin{equation}
    \E[\alg(v)] = \int_{0}^\infty \Pr[\alg(v) \geq x]dx  \geq  \int_{0}^\infty \gamma \cdot\Pr[\max_i v_i \geq x] dx   = \gamma \cdot\E[\max_i v_i] .\label{eq:stochastic-dominance}
\end{equation}
\section{Hardness of Approximation}
\label{sec:main}
In this section, we show that in contrast to the single-sample case, for large enough $k$, no algorithm can guarantee a competitive-ratio of $
\frac{1}{2}$.
\begin{theorem}\label{thm:hard}
    There exists $k_0$ and $ c < \frac{1}{2}$ such that for every $k \geq k_0$, and every algorithm $\alg$, it holds that: $$ \alpha_k(\alg) \leq c.$$
\end{theorem}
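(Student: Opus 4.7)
The plan is to rule out ratios near $\tfrac{1}{2}$ by producing, for every sufficiently large $k$, a constant-size family of ``base'' instances $\{F^{(j)}\}_{j=1}^m$ together with a ``target'' instance $F^\star$ on a common constant-size support of values, such that (i) the induced sample distribution $\SET{F^\star}{k}$ is arbitrarily close in total variation to some convex combination $\sum_j \lambda_j\, \SET{F^{(j)}}{k}$, yet (ii) the prophet values and the optimal online behaviors across this family differ enough to force a constant competitive loss on at least one instance in the family.

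First I would establish a decomposition lemma. Viewing an online algorithm on a fixed support as a collection of conditional acceptance probabilities $\phi_h(S)\in[0,1]$, one for each prefix $h$ of revealed values and each realized multiset $S$, the algorithm's expected reward on an instance $F$ is a linear functional of the maps $S\mapsto \phi_h(S)$ under $S\sim \SET{F}{k}$. Since $\phi_h$ is bounded in $[0,1]$, for any two instances $F,F'$ on the common support,
\[
\bigl|\E_{S\sim\SET{F}{k}}[\phi_h(S)] - \E_{S\sim\SET{F'}{k}}[\phi_h(S)]\bigr|\leq \delta_{TV}\bigl(\SET{F}{k},\SET{F'}{k}\bigr),
\]
and the same inequality holds, by linearity, when the right-hand distribution is replaced by a convex combination $\sum_j \lambda_j\,\SET{F^{(j)}}{k}$. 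Summing over prefixes with weights equal to the support values times the prefix probabilities (which are identical across instances on the same support), the algorithm's expected reward on $F^\star$ must lie within $V_{\max}\cdot \delta_{TV}\bigl(\SET{F^\star}{k},\sum_j\lambda_j \SET{F^{(j)}}{k}\bigr)$ of the convex combination of its expected rewards on the $F^{(j)}$'s, where $V_{\max}$ is the largest value in the common support.

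Second I would construct the family. A natural choice is i.i.d.\ instances whose marginals place probability $p_j$ on a large value $M$ and the rest on a small value, with a constant number $n$ of boxes. The multiset of samples then essentially collapses to the count of $M$'s, which is $\bin{nk}{p_j}$-distributed. By Theorem~\ref{thm:discrete}, each such count is TV-close to $\norm{nkp_j}{nkp_j(1-p_j)}$. Picking the $p_j$'s and weights $\lambda_j$ so that the associated mixture of normals approximately matches the normal approximating $\bin{nk}{p^\star}$ in mean and variance, Theorem~\ref{thm:distance} then yields an $o_k(1)$ total variation distance between $\SET{F^\star}{k}$ and $\sum_j\lambda_j\SET{F^{(j)}}{k}$; the Chernoff bound is used to restrict to ``typical'' sample sets and discard vanishing tails.

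Third I would extract the constant gap. Because the optimal threshold shifts with the underlying Bernoulli parameter, an algorithm that nearly attains ratio $\tfrac{1}{2}$ on every $F^{(j)}$ is pinned, via the decomposition lemma, to a specific blended strategy on $F^\star$. I would tune the $p_j$'s and $F^\star$ so that the prophet expectation on $F^\star$ strictly exceeds what this pinned behavior can extract, yielding a genuine multiplicative gap $c<\tfrac{1}{2}$. The main obstacle I expect is exactly this third step: proving that the forced trade-off produces a \emph{constant} shortfall rather than a vanishing one, since the total variation slack contributed by the decomposition is only $o_k(1)$ and the prophet expectation scales with $V_{\max}$. A secondary subtlety is that samples and online values are drawn from the same underlying distribution, so one must apply the decomposition lemma prefix-by-prefix with careful conditioning to avoid double-counting the randomness of the values themselves against the randomness of the sample pool.
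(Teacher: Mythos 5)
Your first step is essentially the paper's Lemmas~\ref{lem:dist} and~\ref{lem:dist2}, and your plan to combine Theorem~\ref{thm:discrete}, Theorem~\ref{thm:distance} and Chernoff to get an $o_k(1)$ total-variation bound is the same machinery as Lemma~\ref{lem:dd}. The gaps are in the other two steps. The i.i.d.\ two-point family you propose cannot deliver property (i): there the sample statistic is a single binomial $\bin{nk}{p^\star}$ for the target and $\bin{nk}{p_j}$ for the base instances, and a nontrivial mixture of binomials with meaningfully separated parameters has its means spread by $\Theta(k)$ against standard deviations $\Theta(\sqrt{k})$, so its TV distance to any single binomial stays bounded away from $0$; matching mean and variance of the mixture, as you suggest, does not help, since a mixture of well-separated normals with matched first two moments is still TV-far from one normal (and Theorem~\ref{thm:distance} only compares two normals, not a normal to a mixture). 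If instead you shrink the separation to $|p_j-p^\star|=o(k^{-1/2})$ so that the mixture collapses, the base instances become $o(1)$-indistinguishable from $F^\star$ for the online problem as well, and no constant tension can arise. The paper avoids this by making the \emph{target} count distribution literally a mixture: with boxes $2$--$4$ at probability $1/3$ and a ``tuning'' box at probability in $[0,2\epsilon]$, the number of $1$'s under $F^{p^\star}$ is $\bin{3k}{1/3}+\bin{k}{\epsilon}$, which decomposes component-by-component into low-variance shifted binomials realized by the non-i.i.d.\ base instances $(1,1,0,0,\ell,0)$; Chernoff is only used to discard shifts outside $[k-k^{2/3},k+k^{2/3}]$ so that $\ell$ stays in $[0,2\epsilon]$.

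The larger gap is the one you flag yourself: you give no mechanism forcing a \emph{constant} shortfall, and that mechanism is the heart of the proof, not a tuning afterthought. In the paper it comes from two opposing caps proved before any mixture argument: Lemma~\ref{lem:1} shows the algorithm cannot accept the deterministic small value $\xi$ with probability more than $\delta_1$ (otherwise it loses on the instance where box $2$ is a sure $1$), and Lemmas~\ref{lem:1} and~\ref{lem:2} show it cannot accept the first observed $1$ with probability more than $\delta_2$, because a companion instance adds a hidden reward $k^4$ with probability $1/k^3$ -- invisible in the samples except with probability $O(1/k^2)$ -- and any algorithm that commits early with constant probability misses it. Only after transferring these caps to $F^{p^\star}$ via the mixture does the case analysis on how many of $v_2,v_3,v_4$ are nonzero (probabilities $8/27$, $12/27$, $7/27$) produce the explicit constant $c=0.4997$. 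Without an analogue of the hidden-high-reward instances, the under-selection instance, and the resulting quantitative case analysis, your outline establishes indistinguishability but not hardness.
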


To prove this theorem, we introduce a family of distributions $F$'s where any algorithm must obtain strictly less than $c$ times the value of the expected maximum for one of the distributions in the family. We consider the case where $n=6$, and $c= 0.4997$. In our analysis, we use the following parameters to be set later $\xi$, $\delta_1$, $\delta_2$, $\epsilon \in (0,1)$. Let $u= (\xi,1,1,1,1,k^4)$. In our analysis, we treat terms that converge to $0$ as $k_0$ goes to infinity by $o(1)$.
Given a vector of probabilities $p\in [0,1]^n$, we define the distribution $F^p$ to be such that for every $i\in [n]$, $v_i =u_i$ with probability $p_i$, and $0$ otherwise.
We only consider vectors of probabilities where $p_1=1$, $p_2,p_3,p_4\in \{0,\frac{1}{3},1\}$, $p_5\in [0,2\epsilon]$ $p_6\in\{0,\frac{1}{k^3}\}$. We denote this support of vector of $p$'s by $\ps$.

We let the algorithm know that we only consider distributions from the family $\{F^p \mid p\in \ps\}$. This provides more information to the algorithm, and allows us to analyze the behavior of the algorithm in certain scenarios, and characterize all possible algorithms for the inputs derived from this family of distributions.
The multi-set of samples is then defined by $\oc{0}{S},\oc{\xi}{S},\oc{1}{S}, \oc{k^4}{S}$, where $\oc{\xi}{S}=k$, and $\oc{0}{S}+\oc{\xi}{S}+\oc{1}{S}+ \oc{k^4}{S} =n\cdot k$.

We note, that when the samples contain an occurrence of $k^4$ (i.e., $\oc{k^4}{S}>0$) then it must be that $p_6 =\frac{1}{k^3}$, and it is clear that the only optimal online strategy is to reject the first 5 values since the expectation of the last one is strictly larger than all the values in the support of $F_1,\ldots,F_5$. This strategy obtains an expected value of $k$. Thus, we will assume that if $k^4$ was observed in the samples, then the algorithm follows this strategy. We can also assume that the algorithm never selects a value of $0$. 
We denote the event that a value of $k^4$ was observed in the samples by $\event{}(S)$ where $S$ is the sampled set.
We also denote the event that there are $i$ occurrences of $1$ in the sampled multi-set by $\event{i}(S)$, where $0 \leq i \leq 4k$.
We denote by $\pre$ the set of all observable prefixes of $v$ of length at most $5$, that is $$\pre:=  \{\xi\} \cup \left(\{\xi\} \times \{0,1\}\right) \cup \left(\{\xi\} \times \{0,1\}^2\right) \cup \left(\{\xi\} \times \{0,1\}^3 \right) \cup \left(\{\xi\} \times \{0,1\}^4\right).$$  
Thus, an online algorithm that satisfies our assumptions can be characterized by a mapping $q:\pre\times \{0,\ldots,4k\} \rightarrow [0,1]$, which defines for each prefix $t\in\pre$, and a number of occurrences of $1$'s in the samples $i\in \{0,\ldots,4k\}$ the probability $q(t,i)$ of selecting the last value of the prefix $t$ conditioned on $\event{i}(S)\wedge \bar{\event{}}(S)$ and that the last value was reached.
For a vector $p\in \ps$ and  a prefix $t\in\pre$, we denote by $$q_p(t) :=\E_{S\sim \SET{F^p}{k}}[q(t,i) \mid \event{i}(S) \wedge \bar{\event{}}(S)].$$
We first analyze the probability that $\alg$ selects the value $\xi$. We define the following prefixes: $t_1:=(\xi)$, $t_2:= (\xi,1)$, $t_3:=(\xi,0,1)$, and $t_4:=(\xi,0,0,1)$.

\begin{lemma}\label{lem:1}
    If there exists $\ell\in [0,2\epsilon]$ such that for the distribution $F^p$ with $p= (1,1,0,0,\ell,0)$ it holds that $\Pr[q_p(t_1) ] \geq \delta_1$, then there exists  $p'\in\{(1,1,0,0,\ell,0),(1,1,0,0,\ell,\frac{1}{k^3})\}$ such that  $$\frac{\E_{S \sim \SET{F^{p'}}{k} }\left[\E_{v\sim F^{p'}} [\alg(S,v)]\right]}{\E_{v\sim F^{p'}} [\max_i v_i]}\leq c .$$
\end{lemma}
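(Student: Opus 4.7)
The plan is to use $p' = (1,1,0,0,\ell,1/k^3)$ as the distribution that witnesses the failure. The intuition is that $\Pr_{S \sim \SET{F^{p'}}{k}}[\event{}(S)] = 1 - (1 - 1/k^3)^k = O(1/k^2)$, so samples from $F^{p'}$ are indistinguishable from those of $F^p$ in the overwhelming majority of realizations, yet the underlying $v$ carries a rare but enormous value $v_6 = k^4$ whose contribution dominates the prophet. If the algorithm commits to $\xi$ with probability at least $\delta_1$ at prefix $t_1$ under $F^p$, then by identity-blindness it does so with essentially the same probability under $F^{p'}$, thereby forfeiting the $k^4$ payoff almost always.

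Concretely, I would first compute $\E_{v \sim F^{p'}}[\max_i v_i] = k^4 \cdot (1/k^3) + 1 \cdot (1 - 1/k^3) = k + 1 - 1/k^3$. Then I would split $\E[\alg(S,v)]$ by conditioning on $\event{}(S)$: under the standing assumption, $\E[\alg(S,v) \mid \event{}(S)] = k$, so this branch contributes only $O(1/k)$. For the complementary branch, the key observation is that, conditional on $\bar{\event{}}(S)$, each sample drawn from $F_6$ is forced to be $0$, so the distribution of $\oc{1}{S}$ under $\SET{F^{p'}}{k}$ matches exactly its distribution under $\SET{F^p}{k}$. Since the algorithm is characterized by $q : \pre \times \{0,\ldots,4k\} \to [0,1]$ and $q$ depends only on the prefix and on $i = \oc{1}{S}$, this gives $\E_{S \sim \SET{F^{p'}}{k}}[q(t_1, \oc{1}{S}) \mid \bar{\event{}}(S)] = q_p(t_1) \geq \delta_1$. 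Conditional on $\bar{\event{}}(S)$, accepting $v_1$ yields $\xi$, whereas rejecting it yields at most $\E[\max(v_2,\ldots,v_6)] \leq 1 + k^4 / k^3 = k+1$, and since the map $q \mapsto q \xi + (1-q)(k+1)$ is decreasing in $q$ the bound $q_p(t_1) \geq \delta_1$ implies $\E[\alg(S,v) \mid \bar{\event{}}(S)] \leq \delta_1 \xi + (1-\delta_1)(k+1)$. Combining, $\E[\alg(S,v)] \leq O(1/k) + \delta_1 \xi + (1-\delta_1)(k+1)$, which yields a competitive-ratio of at most $(1-\delta_1) + O(1/k)$. Choosing $\delta_1$ strictly above $1 - c = 0.5003$ and taking $k_0$ sufficiently large makes this at most $c$.

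The main obstacle is the conditional-distribution matching step, namely showing rigorously that the algorithm's acceptance probability at prefix $t_1$ under $F^{p'}$ conditioned on $\bar{\event{}}(S)$ is exactly $q_p(t_1)$. This is where identity-blindness is essential: the algorithm sees only the multiset of sample values, not their source distributions, and conditioning on $\bar{\event{}}(S)$ replaces each $F_6$-sample with its zero-value outcome, which coincides in distribution with an $F_6$-sample under $p$. It is also why the assumed ``wait for $v_6$'' behavior on $\event{}(S)$ does not backfire --- it triggers only when $k^4$ is actually observed in the samples, so the $\bar{\event{}}$-branch used in our calculation is untouched. The appearance of $p$ itself among the permitted choices of $p'$ in the conclusion is a convenient hedge: for parameter regimes in which $\xi$ is not yet negligible, the shorter bound $\E_{S\sim\SET{F^p}{k}} \E_{v\sim F^p}[\alg(S,v)] \leq 1 - \delta_1(1-\xi)$ under $F^p$ (where $\E[\max_i v_i] = 1$) can be invoked directly instead.
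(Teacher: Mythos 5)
There is a genuine gap: your argument only establishes the lemma when $\delta_1$ is large, but the lemma has to hold for the small value of $\delta_1$ that the paper later fixes. Your main branch bounds the algorithm under $p'=(1,1,0,0,\ell,\frac{1}{k^3})$ by $\delta_1\xi+(1-\delta_1)(k+1)+O(1/k)$, so the ratio you obtain is $(1-\delta_1)+O(1/k)$, and you then need $\delta_1>1-c\approx 0.5003$. But $\delta_1$ is not yours to choose: it is one of the global parameters $\xi,\delta_1,\delta_2,\epsilon$ set at the end of the section, and the final setting is $\delta_1=0.01$ (it must be small, because the theorem's last step uses $q_{p^*}(t_1)\leq \delta_1+o(1)$ to make the $h(v)=0$ branch contribute only $\xi\delta_1\cdot\frac{8}{27}$; with $\delta_1>\frac12$ that computation exceeds $\frac12$ and no $c<\frac12$ survives). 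With $\delta_1=0.01$ your bound is $0.99+O(1/k)$, nowhere near $c=0.4997$. Your fallback for the option $p'=p$ fares no better: $1-\delta_1(1-\xi)\geq \xi=0.9$, so it can never drop below $c$ for the paper's $\xi$.

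The missing idea is the paper's two-case ``over-selection'' dichotomy with the second threshold $\delta_2\approx\frac12$, applied to the total probability of stopping at $v_1$ \emph{or} $v_2$ on the prefix $(\xi,1)$. If that total is at least $\delta_2$ for some $\ell$, then under $p'=(1,1,0,0,\ell,\frac{1}{k^3})$ the algorithm reaches $v_6$ with probability at most $1-\delta_2+o(1)$, giving ratio at most $1-\delta_2+o(1)<c$ (this is where the hidden $k^4$ is actually used, and it penalizes stopping early at \emph{either} of the first two boxes, not just at $\xi$). If it is below $\delta_2$ for all $\ell$, then under $p'=p$ itself, where the maximum is exactly $1$, the hypothesis $q_p(t_1)\geq\delta_1$ forces at least a $\delta_1$ chunk of the at-most-$\delta_2$ stopping budget to be spent on the strictly smaller value $\xi$, yielding at most $\xi\delta_1+(\delta_2-\delta_1)+2\epsilon<c$. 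It is this interplay --- a small $\delta_1$ costing $(1-\xi)\delta_1$ against a $\delta_2$-capped budget --- that makes the lemma usable downstream; your proof, which penalizes selecting $\xi$ only through forfeiting $v_6$, cannot reach $c$ unless the selection probability itself exceeds roughly one half. (Your conditional-distribution matching step --- that conditioned on $\bar{\event{}}(S)$ the law of $\oc{1}{S}$ under $F^{p'}$ coincides with that under $F^{p}$ --- is correct and is also implicit in the paper's proof, but it does not rescue the overall bound.)
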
 
\begin{proof}
We say that the algorithm over-select for some $\ell$ if: $$\sum_{i=0}^{4k} \Pr[\event{i}(S)] \left(q(t_1,i) + (1-q(t_1,i)) q(t_2,i)\right) \geq \delta_2.$$
In other words, the expected probability after separating into different cases depending on $\oc{1}{S}$ that the algorithm selects a value among $v_1,v_2$, given that $v$ starts with $(\xi,1)$ is at least $\delta_2$.
\paragraph{Case 1: The algorithm over-select for some $\ell\in[0,2\epsilon]$.}  Then for $p'= (1,1,0,0,\ell,\frac{1}{k^3})$ it holds that \begin{equation}
    \E_{v\sim F^{p'}} [\max_i v_i]  \geq  k. \label{eq:opt1}
\end{equation}
However, if the algorithm selects a value among $v_1,\ldots,v_5$, it gets at most a value of $1$, and if not, it gets at most the expectation of $v_6$ which is $k$. Thus,
\begin{eqnarray}
    \E_{S \sim \SET{F^{p'}}{k} }\left[\E_{v\sim F^{p'}} [\alg(S,v)]\right] & \leq & 1 + k\cdot \Pr[\alg \mbox{ reaches }v_6].  \label{eq:alg1}  
\end{eqnarray}
Moreover, in order to reach $v_6$, it must not select a value among $v_1,v_2$, which allows us to bound the probability by the following expression
\begin{eqnarray}    
    \Pr[\alg \mbox{ reaches }v_6]  & \leq &  1 -   \sum_{i=0}^{4k} \Pr[\event{i}(S) \wedge \bar{\event{}}(S)] \left(q(t_1,i) + (1-q(t_1,i)) q(t_2,i)\right) \nonumber \\ & = &  1 -  \Pr[\bar{\event{}}(S)] \cdot  \sum_{i=0}^{4k} \Pr[\event{i}(S) ] \left(q(t_1,i) + (1-q(t_1,i)) q(t_2,i)\right)  \nonumber \\ &\leq & 1 -  \Pr[\bar{\event{}}(S)] \cdot \delta_2 = 1- \delta_2 +o(1), \label{eq:reach5}
\end{eqnarray}
where the first inequality is by bounding the probability that we select a value among $v_1,v_2$; the first equality is since $\event{}(S)$ and $\event{i}(S)$ are independent; the second inequality is by the definition of over-selection at $\ell$; the last equality is since $\Pr[\event{}(S)] \leq \frac{1}{k^2}$.

By combining Equations~\eqref{eq:opt1}, \eqref{eq:alg1}, \eqref{eq:reach5} we get that $$\alpha_{k}(\alg) \leq \frac{1+k(1-\delta_2+o(1))}{k} = 1-\delta_2 +o(1) .$$
If our parameters satisfy that 
\begin{equation}
    1-\delta_2<c, \label{eq:cond1}
\end{equation}
then this concludes the proof of this case.
\paragraph{Case 2: The algorithm does not over-select for any $\ell\in[0,2\epsilon]$.} For $p'=p$, it holds that 
\begin{equation}
    \E_{v\sim F^{p'}} [\max_i v_i]  = 1. \label{eq:opt2}
\end{equation}
However, the probability that $\alg$ selects $\xi$ is at least 
\begin{eqnarray}
    \E_{S \sim \SET{F^{p'}}{k} }\left[\E_{v\sim F^{p'}} [\alg(S,v)]\right]  =  \xi \cdot \Pr[\alg \mbox{ selects } \xi] +   \Pr[\alg \mbox{ selects } 1] \leq \xi\cdot \delta_1 + (\delta_2-\delta_1 +2\epsilon) ,  \label{eq:alg2}
    \end{eqnarray}
    where the last inequality is since by the assumption of the claim, the probability of selecting $\xi$ is at least $\delta_1$, and the probability of selecting something among the first two values is at most $\delta_2$, and the probability of observing a non-zero value in $v_3,v_4,v_5,v_6$ is bounded by $2\epsilon$.

If we choose parameters that satisfy \begin{equation}
    \xi\cdot \delta_1+\delta_2-\delta_1 +2\epsilon < c ,\label{eq:cond2}
\end{equation}
    then by combining with Equations~\eqref{eq:opt2} and \eqref{eq:alg2} it concludes the proof. 
\end{proof}
We next claim that the algorithm cannot over-select  the prefixes $t_2,t_3,t_4$. The following claim follows from the analysis of the first case of Claim~\ref{lem:1}.  
\begin{lemma}\label{lem:2}
Let $p^2= (1,1,0,0,\ell,0)$, $p^3= (1,0,1,0,\ell,0)$, $p^4= (1,0,0,1,\ell,0)$.
    For $j\in\{2,3,4\}$, if there exists $\ell\in [0,2\epsilon]$ such that for the distribution $F^{p^j}$ it holds that $\Pr[q_{p^j}(t_j) ] \geq \delta_2$, then for  $p'=p^j+ (0,0,0,0,0,\frac{1}{k^3})$ it holds that  $$\frac{\E_{S \sim \SET{F^{p'}}{k} }\left[\E_{v\sim F^{p'}} [\alg(S,v)]\right]}{\E_{v\sim F^{p'}} [\max_i v_i]}\leq c .$$
\end{lemma}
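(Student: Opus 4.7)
My plan is to mimic Case~1 of the proof of Lemma~\ref{lem:1} essentially verbatim, with the role played there by ``over-selection on the prefix $(\xi,1)$'' now taken over by the hypothesis $q_{p^j}(t_j) \geq \delta_2$. Fix $j\in\{2,3,4\}$, let $\ell$ be as in the statement, and set $p' = p^j + (0,0,0,0,0,\tfrac{1}{k^3})$.

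The first two ingredients carry over without change. Since $v_6 = k^4$ with probability $\tfrac{1}{k^3}$, we have $\E_{v \sim F^{p'}}[\max_i v_i] \geq k$, exactly as in \eqref{eq:opt1}. Any reward selected among the first five positions is at most $1$ and $v_6$'s conditional expectation is at most $k$, so
\begin{equation*}
    \E_{S \sim \SET{F^{p'}}{k}}\!\left[\E_{v \sim F^{p'}}[\alg(S,v)]\right] \leq 1 + k \cdot \Pr[\alg \text{ reaches } v_6],
\end{equation*}
mirroring \eqref{eq:alg1}.

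The substantive step replaces \eqref{eq:reach5}: I will show $\Pr[\alg \text{ reaches } v_6] \leq 1 - \delta_2 + o(1)$. The contribution from $\event{}(S)$ is $o(1)$ since $\Pr[\event{}(S)] \leq \tfrac{1}{k^2}$. Conditional on $\bar{\event{}}(S)$, the observed prefix of length $j$ is almost surely $t_j = (\xi, 0, \ldots, 0, 1)$; since the algorithm never selects $0$, reaching $v_6$ forces the algorithm to both pass on $v_1 = \xi$ and, later, pass on $v_j = 1$, yielding
\begin{equation*}
    \Pr\!\left[\alg \text{ reaches } v_6 \mid \bar{\event{}}(S), \event{i}(S)\right] \leq (1 - q(t_1, i))(1 - q(t_j, i)) \leq 1 - q(t_j, i).
\end{equation*}
Averaging over $i$ and observing that the conditional distribution of $S$ under $p'$ given $\bar{\event{}}(S)$ coincides with the distribution of $S$ under $p^j$ (in both, the samples from $v_6$ are identically $0$), the right-hand side averages to $1 - q_{p^j}(t_j) \leq 1 - \delta_2$, as required.

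Combining the three estimates in the style of Case~1 of Lemma~\ref{lem:1} yields $\alpha_k(\alg) \leq \tfrac{1 + k(1-\delta_2+o(1))}{k} = 1 - \delta_2 + o(1)$, and condition~\eqref{eq:cond1} ($1 - \delta_2 < c$) closes the argument. The only (mild) obstacle is the distributional identification above, which is what lets the bound $q_{p^j}(t_j) \geq \delta_2$ transport directly into the $p'$-computation; everything else is a transcription of Lemma~\ref{lem:1}'s first case.
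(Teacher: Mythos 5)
Your proposal is correct and follows essentially the same route as the paper, which proves Lemma~\ref{lem:2} simply by invoking the analysis of Case~1 of Lemma~\ref{lem:1}; your transcription (bounding $\Pr[\alg \mbox{ reaches } v_6]$ via $(1-q(t_1,i))(1-q(t_j,i)) \leq 1-q(t_j,i)$, using the independence of $\event{}(S)$ from $\event{i}(S)$ so that the $1$-count distribution under $p'$ matches that under $p^j$, and closing with condition~\eqref{eq:cond1}) is exactly the intended adaptation.
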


We next assume that $\alg$ does not satisfy the cases considered in Lemmas~\ref{lem:1} and \ref{lem:2}.
Consider the distribution $F^{p^*}$ for $p^*= (1,\frac{1}{3},\frac{1}{3},\frac{1}{3},\epsilon,0)$. 
We next bound the difference in $q$ for different distributions over the samples.
\begin{lemma}\label{lem:dist}
    For  $D_1,D_2$ two distributions over $\{0,\ldots,4k\}$, and $t \in \pre$, it holds that  
$$\E_{i\sim D_1}[q(t,i) ] \leq \E_{i\sim D_2}[q(t,i) ] + \delta_{TV}(D_1,D_2)  .$$
\end{lemma}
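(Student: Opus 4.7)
The plan is to directly unpack the expectations as sums over $\{0,\ldots,4k\}$ and use the fact that $q(t,\cdot)$ takes values in $[0,1]$, which lets us bound the signed difference of the two distributions by their total variation distance.

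Concretely, I would start by writing
\[
\E_{i\sim D_1}[q(t,i)] - \E_{i\sim D_2}[q(t,i)] \;=\; \sum_{i=0}^{4k} \bigl(D_1(i)-D_2(i)\bigr)\,q(t,i).
\]
Let $A := \{i \in \{0,\ldots,4k\} : D_1(i) \geq D_2(i)\}$ be the set of indices where the signed difference is nonnegative. Since $q(t,i) \in [0,1]$, on $A$ we can upper bound $q(t,i)$ by $1$ and on the complement we can lower bound it by $0$, giving
\[
\sum_{i=0}^{4k} \bigl(D_1(i)-D_2(i)\bigr)\,q(t,i) \;\leq\; \sum_{i\in A}\bigl(D_1(i)-D_2(i)\bigr) \;=\; D_1(A)-D_2(A).
\]

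The last step is to recognize that $D_1(A) - D_2(A) \leq \sup_{B} |D_1(B)-D_2(B)| = \delta_{TV}(D_1,D_2)$ by the definition of total variation distance (applied to the countable $\sigma$-algebra on $\{0,\ldots,4k\}$). Rearranging yields the claimed inequality. There is no real obstacle here; the only thing to be careful about is that the bound is one-sided (matching the statement of the lemma), and that we apply the $[0,1]$ boundedness of $q(t,\cdot)$ on the two sides of $A$ in opposite directions, which is exactly what makes the positive part of the signed measure dominate the expression.
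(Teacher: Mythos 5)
Your proof is correct and is essentially the same argument as the paper's: the paper bounds $\Pr_{D_1}[X=i]$ by $\Pr_{D_2}[Y=i]$ plus the positive part of the difference and uses $q(t,i)\leq 1$ together with the identity that the summed positive parts equal $\delta_{TV}(D_1,D_2)$, which is exactly your decomposition over the set $A$ where $D_1(i)\geq D_2(i)$. No meaningful difference in approach.
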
 
\begin{proof}
    It holds that 
    \begin{eqnarray*}        
  \E_{i\sim D_1}[q(t,i) ]  & = & \sum_{i=0}^{4k} q(t,i) \cdot \Pr_{X\sim D_1} [X=i]  \\ &\leq & \sum_{i=0}^{4k} q(t,i) \cdot \left(\Pr_{Y\sim D_2} [Y=i]+ \left( \Pr_{X\sim D_1} [X=i]-\Pr_{Y\sim D_2} [Y=i]\right)^+\right)   \\ &\leq &     
  \E_{i\sim D_2}[q(t,i) ]  + \delta_{TV}(D_1,D_2) , 
      \end{eqnarray*}
      where the last inequality is since $q(t,i) \leq 1$ and since $\sum_{i=0}^{4k}  \left( \Pr_{X\sim D_1} [X=i]-\Pr_{Y\sim D_2} [Y=i]\right)^+ = \delta_{TV}(D_1,D_2) $.
\end{proof}

We next extend Lemma~\ref{lem:dist} to convex combinations of distributions:
\begin{lemma}\label{lem:dist2}
    Let  $D,D_1,D_2,\ldots,D_m$ be distributions over $\{0,\ldots,4k\}$, and let $c_1,\ldots,c_m \in [0,1]$ be coefficients that sum up to $1$. Let $D'$ be the distribution over $\{0,,\ldots 4k\}$ constructed by first drawing $j$ with probability $c_j$, and then drawing $i$ from $D_j$. Then for every $t \in \pre$, it holds that  
$$\E_{i\sim D}[q(t,i) ] \leq \E_{i\sim D'}[q(t,i) ] + \delta_{TV}(D,D')  .$$
\end{lemma}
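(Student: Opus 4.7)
The plan is to derive Lemma~\ref{lem:dist2} as an essentially immediate consequence of Lemma~\ref{lem:dist}, since the mixture $D'$ is itself a probability distribution on the same finite support $\{0,\ldots,4k\}$. First I would verify that $D'$ is well-defined as a probability measure: because $c_1,\ldots,c_m\in[0,1]$ sum to $1$ and each $D_j$ is a probability distribution on $\{0,\ldots,4k\}$, the assignment $\Pr_{X\sim D'}[X=i] = \sum_{j=1}^m c_j\cdot \Pr_{X_j\sim D_j}[X_j=i]$ defines a valid probability distribution on $\{0,\ldots,4k\}$.

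Next I would simply invoke Lemma~\ref{lem:dist} on the pair $(D,D')$, since that lemma was stated for \emph{arbitrary} pairs of distributions on $\{0,\ldots,4k\}$. This directly yields
$$\E_{i\sim D}[q(t,i)] \leq \E_{i\sim D'}[q(t,i)] + \delta_{TV}(D,D'),$$
which is exactly the inequality claimed. For later use I would also record the identity $\E_{i\sim D'}[q(t,i)] = \sum_{j=1}^m c_j \cdot \E_{i\sim D_j}[q(t,i)]$, which follows by linearity of expectation from the definition of $D'$; this is the form that will make the lemma convenient when comparing the sample-set distribution induced by one $F^p$ against a convex combination of sample-set distributions induced by other vectors in $\ps$.

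There is essentially no hard step here. The only subtlety worth flagging is that the mixture structure of $D'$ introduces no additional difficulty: the total variation distance $\delta_{TV}(D,D')$ is defined purely in terms of the underlying measurable space, not in terms of how $D'$ was constructed, so Lemma~\ref{lem:dist} applies as a black box. The lemma is worth stating separately only because the mixture perspective is the form in which it will be deployed in the subsequent arguments of Section~\ref{sec:main}.
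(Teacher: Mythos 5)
Your proof is correct, and it is in substance the same total-variation argument the paper uses, but packaged more economically: you observe that the mixture $D'$ is just another probability distribution on $\{0,\ldots,4k\}$ and that Lemma~\ref{lem:dist} was stated for an arbitrary pair of distributions on that set, so the claimed inequality is an immediate corollary with $(D_1,D_2)=(D,D')$. The paper instead re-derives the bound from scratch, expanding $\Pr_{X\sim D'}[X=i]=\sum_j c_j \Pr_{Y_j\sim D_j}[Y_j=i]$ inside the sum and then repeating the positive-part estimate $\sum_i \left(\Pr_{X\sim D}[X=i]-\Pr_{Y\sim D'}[Y=i]\right)^+=\delta_{TV}(D,D')$; the explicit expansion buys nothing beyond what your black-box invocation gives, since (as you note) $\delta_{TV}(D,D')$ depends only on the two measures and not on how $D'$ was built. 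Your additional remark that $\E_{i\sim D'}[q(t,i)]=\sum_j c_j\,\E_{i\sim D_j}[q(t,i)]$ by linearity is exactly the form in which the lemma is used downstream (together with Lemma~\ref{lem:dd}), so recording it is appropriate. No gap.
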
 
\begin{proof}
    It holds that 
    \begin{eqnarray*}        
  \E_{i\sim D}[q(t,i) ]  & = & \sum_{i=0}^{4k} q(t,i) \cdot \Pr_{X\sim D} [X=i]  \\   & = & \sum_{i=0}^{4k} q(t,i) \sum_{j=1}^m c_j \Pr_{X\sim D} [X=i] \\ & = & \sum_{i=0}^{4k} q(t,i) \sum_{j=1}^m c_j  \left(\Pr_{Y_j\sim D_j} [Y_j=i]+ \left(  \Pr_{X\sim D} [X=i]-\Pr_{Y_j\sim D_j} [Y_j=i]\right)\right) \\& = & \sum_{i=0}^{4k} q(t,i) \left(\Pr_{Y\sim D'}[Y=i] + \left(\Pr_{X\sim D}[X=i] -\Pr_{Y\sim D'}[Y=i]\right)\right)  \\& \leq & \sum_{i=0}^{4k} q(t,i) \left(\Pr_{Y\sim D'}[Y=i] + \left(\Pr_{X\sim D}[X=i] -\Pr_{Y\sim D'}[Y=i]\right)^+\right)  \\ &\leq & 
  \E_{i\sim D'}[q(t,i) ]  + \delta_{TV}(D,D') , 
      \end{eqnarray*}
      where the last inequality is since $q(t,i) \leq 1$ and since $\sum_{i=0}^{4k}  \left( \Pr_{X\sim D_1} [X=i]-\Pr_{Y\sim D_2} [Y=i]\right)^+ = \delta_{TV}(D,D') $.
\end{proof}

So far, in Lemmas~\ref{lem:1} and \ref{lem:2} we showed that for some distributions on the sampled $\SET{F^p}{k}$ imposed by an $F^p$ that satisfies the conditions of the lemmas, the algorithm cannot select $\xi$ with a probability significantly larger than $\delta_1$, and cannot select a value of $1$ under prefixes $t_2,t_3,t_4$ with a probability significantly more than $\delta_2$.   
Our next step is to utilize Lemma~\ref{lem:dist} to show that the same holds for $F^{p^*}$.
The main difficulty is that the distribution over the samples $\SET{F^{p^*}}{k}$ imposed by $F^{p^*}$ is far in terms of total variation distance from any distribution over samples imposed by $F^{p}$ that satisfies Lemma~\ref{lem:1} or Lemma~\ref{lem:2}.
However, Lemma~\ref{lem:dist2} allows us to consider convex combinations of such distributions. In particular, we show that the distribution over $\oc{1}{S}$ where $S\sim \SET{F^{p^*}}{k}$ is close enough to a convex combination of $\oc{1}{S}$ where $S\sim\SET{F^p}{k}$ of $p$'s that satisfy one of the lemmas.

\begin{lemma}\label{lem:dd}
Let $D^*$ be the distribution of $\oc{1}{S}$ where $S\sim\SET{F^{p^*}}{k}$.
    There exist coefficients  $c_1,\ldots c_m\in [0,1]$ that sum up to $1$ and vectors $p^1,\ldots,p^m$ that satisfies the conditions of Lemma~\ref{lem:1} and \ref{lem:2} such that the distribution $D$ over $\{0,\ldots,4k\}$ where $j$ is drawn with probability $c_j$, and then $\oc{1}{S}$ is sampled where $S\sim F^{p^j}$ satisfies $$ \delta_{TV}(D,D^*) \leq  o(1).$$  
\end{lemma}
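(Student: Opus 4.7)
The plan is to exhibit both $D^*$ and the mixture $D$ as lying within total variation distance $o(1)$ of a common normal distribution, so that the triangle inequality yields the desired bound. First I would identify the parameters of $D^*$: since under $F^{p^*}$ the samples from position~$1$ are deterministically $\xi$ and those from position~$6$ are all $0$, the count $\oc{1}{S}$ is the total number of ones among the $4k$ samples drawn from positions $2,3,4,5$, giving $D^* = \bin{3k}{1/3} + \bin{k}{\epsilon}$. Applying Theorem~\ref{thm:discrete} to each binomial summand, together with the fact that total variation distance is non-increasing under convolution with a fixed distribution, shows that $D^*$ is $o(1)$-close in total variation to $\norm{\mu_*}{\sigma_*^2}$ with $\mu_* = k+k\epsilon$ and $\sigma_*^2 = \tfrac{2k}{3}+k\epsilon(1-\epsilon)$.

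For the mixture, I would take each $p^j$ of the form $(1,1,0,0,\ell_j,0)$ with $\ell_j \in [0,2\epsilon]$, which is a shape covered both by Lemma~\ref{lem:1} and by the $p^2$-case of Lemma~\ref{lem:2}, and under which $\oc{1}{S}$ has distribution $k + \bin{k}{\ell_j}$. I would place the $\ell_j$ on a very fine grid (spacing, say, $k^{-3/2}$) inside the interval $[\epsilon - k^{-1/4},\epsilon+k^{-1/4}]$, which lies safely in $(0,2\epsilon)$ because $\epsilon$ is a positive constant. Each $c_j$ is set proportional to the density of $\norm{\epsilon}{\tfrac{2}{3k}}$ at $\ell_j$ and then renormalized. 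Since this outer normal has standard deviation $\sqrt{2/(3k)} = o(k^{-1/4})$, the truncation error outside the chosen interval is $o(1)$, and the grid discretization error is also $o(1)$.

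The remaining task is to compare $D$ to $\norm{\mu_*}{\sigma_*^2}$. By Theorem~\ref{thm:discrete} each summand $k+\bin{k}{\ell_j}$ is $o(1)$-close to $\norm{k+k\ell_j}{k\ell_j(1-\ell_j)}$. Since every $\ell_j$ lies within $k^{-1/4}$ of the constant $\epsilon$, each variance $k\ell_j(1-\ell_j)$ agrees with $\sigma_0^2 := k\epsilon(1-\epsilon)$ up to a multiplicative factor $1+o(1)$, so Theorem~\ref{thm:distance} permits replacing each Gaussian component by $\norm{k+k\ell_j}{\sigma_0^2}$ at a further cost of $o(1)$ in total variation. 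Finally, because $\{(c_j,\ell_j)\}$ approximates $\norm{\epsilon}{\tfrac{2}{3k}}$, the mixture of $\norm{k+k\ell_j}{\sigma_0^2}$ across $j$ behaves like a normal whose mean is itself normally distributed as $\norm{k+k\epsilon}{k^2 \cdot \tfrac{2}{3k}} = \norm{\mu_*}{\tfrac{2k}{3}}$, and the standard convolution identity for Gaussians gives $\norm{\mu_*}{\tfrac{2k}{3}+\sigma_0^2}=\norm{\mu_*}{\sigma_*^2}$. Combining everything by triangle inequality yields $\delta_{TV}(D,D^*)=o(1)$.

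The main delicate step I anticipate is justifying rigorously that the discrete $\ell_j$-mixture of fixed-variance Gaussians approximates, in total variation rather than merely in weak convergence, the single Gaussian obtained from integrating against the continuous outer normal. I expect the cleanest route is to bound the Kolmogorov distance between the discrete and continuous outer measures using the smoothness of the inner Gaussian density (whose derivative is uniformly $O(1/\sigma_0)=O(k^{-1/2})$), and then to pass to total variation using $\sigma_0 = \Theta(\sqrt{k})$; a grid spacing $k^{-3/2}$ is more than enough. Once this routine approximation is in hand, the rest of the argument is a direct application of Theorems~\ref{thm:discrete} and~\ref{thm:distance} together with the triangle inequality.
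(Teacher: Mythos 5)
Your proposal is correct in substance and shares the paper's core idea (mixture components of the shape $(1,1,0,0,\ell,0)$, so that $\oc{1}{S}\sim k+\bin{k}{\ell}$, compared to $D^*=\bin{3k}{1/3}+\bin{k}{\epsilon}$ through normal approximation), but the execution is genuinely different. The paper does not discretize a continuous mixing measure at all: it takes the mixing weights to be exactly $c_j=\Pr[\bin{3k}{1/3}=j]$ with $\ell_j=g(j)=\epsilon+\frac{j-k}{k}$ (clipped), couples this outer draw with the $\bin{3k}{1/3}$ component of $D^*$, and then compares the two \emph{conditional} laws $x+\bin{k}{\epsilon}$ and $k+\bin{k}{g(x)}$, which have identical means and variances agreeing up to $o(1)$, via Theorem~\ref{thm:discrete} and Theorem~\ref{thm:distance}, summing over the bulk $|x-k|\le k^{2/3}$. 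You instead weight a fine grid of $\ell$-values by the density of $\norm{\epsilon}{\frac{2}{3k}}$ and show that both $D^*$ and your mixture are $o(1)$-close to the single normal $\norm{\mu_*}{\sigma_*^2}$ with $\mu_*=k+k\epsilon$, $\sigma_*^2=\frac{2k}{3}+k\epsilon(1-\epsilon)$, using the Gaussian location-mixture identity; your outer measure is precisely the CLT limit of the paper's outer binomial. What the paper's coupling buys is exactly the avoidance of the ``delicate step'' you flag: since its mixing distribution is already discrete and matched to $D^*$, no discretization/truncation of a continuous outer Gaussian and no TV-approximation of a Gaussian location mixture by a grid mixture is needed. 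What your route buys is a cleaner global picture (one explicit reference normal for both sides), at the cost of that extra, but routine, approximation argument, which your grid spacing $k^{-3/2}$ versus inner standard deviation $\Theta(\sqrt{k})$ and truncation at $\pm k^{-1/4}$ indeed handle. Two shared points of informality, not specific to your write-up, are that Theorem~\ref{thm:discrete} is applied with success probabilities that drift with $k$ near $\epsilon$ (a uniform version is needed, as in the paper), and that discrete-versus-continuous comparisons must be read in the discretized total-variation sense of Theorem~\ref{thm:discrete}; also, as in the paper, for prefixes $t_3,t_4$ one should note that $(1,0,1,0,\ell,0)$ and $(1,0,0,1,\ell,0)$ induce the same law of $\oc{1}{S}$ as $(1,1,0,0,\ell,0)$, so the same weights serve all cases of Lemma~\ref{lem:1} and Lemma~\ref{lem:2}.
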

\begin{proof}
    
We first observe that $\oc{1}{S}$ where $S\sim \SET{F^{p^*}}{k}$ is a sum of two binomials. Let $X \sim \bin{3k}{1/3}$ and let $Y\sim \bin{k}{\epsilon} $. An alternative way to define the distribution $D^*$ of $X+Y$, is to define a conditional variable $Y_x = Y+x$, and consider the distribution of $Y_X$.

On the other hand, consider the following convex combination of the distributions: We think of the coefficients of each $p^j$ as probabilities, where $p^j$ is sampled with a probability of $\Pr[\bin{3k}{1/3}=j]$.  
We denote the function $g:\{0,\ldots,3k\}\rightarrow[0,1]$ where $g(x):=\min(1,\max(0,\frac{x-k}{k}+\epsilon))$. Then let $p^{j}=(1,1,0,0,g(j),0)$.
We first draw $X'$ from $\bin{3k}{\frac{1}{3}}$, then the amount $ \oc{1}{S}$ where $S\sim \SET{F^{p^{X'}}}{k}$ is distributed as 
 $Z_{X'} = k+ \bin{k}{\epsilon+g(X')}$.  This is equivalent to choosing $p^j$ with a probability of $\Pr[X'=j]$.

We bound the $\delta_{TV}(D,D^*)$.
It holds that 
\begin{eqnarray}
    \delta_{TV}(D,D^*) & = & \sum_{i=0}^{4k} |\Pr[Y_X=i]-\Pr[Z_{X'}=i]| \nonumber
\\ & \leq &
\sum_{x=0}^{3k}\sum_{i=0}^{4k} |\Pr[Y_X=i \wedge X=x]-\Pr[Z_{X'}=i \wedge X'=x]|
\nonumber \\ &=& \sum_{x=0}^{3k}\Pr[X=x]\sum_{i=0}^{4k} |\Pr[Y_x=i]-\Pr[Z_{x}=i]|
\nonumber \\ & \leq &  o(1) + \sum_{x=k-k^{2/3}}^{k+k^{2/3}}\Pr[X=x]\sum_{i=0}^{4k} |\Pr[Y_x=i]-\Pr[Z_{x}=i]|, \label{eq:bound}
\end{eqnarray}
where the first inequality is by refining the outcome space; the second inequality is since the probability of binomial to be $\Theta(k^{1/6})$ standard deviations from its mean is negligible as $k$ grows large.

We also note that $Y_x$ is a constant ($x$) plus a binomial distribution ($\bin{k}{\epsilon}$), so it has an expectation of $\mu_Y=x+k\epsilon$, and a variance of $\sigma_Y^2=k\epsilon(1-\epsilon)$. By Theorem~\ref{thm:discrete} it holds that\footnote{Here we also use that adding a constant to the two random variables does not change their distance.} \begin{equation}
\lim_{k\rightarrow \infty }    \delta_{TV}(D^*,\norm{\mu_Y}{\sigma_Y^2}) =0 . \label{eq:dis1}
\end{equation}  

On the other hand, $Z_{x}\sim k+ \bin{k}{\epsilon+g(x)}$, which has an expectation of $\mu_Z= k+ kg(x) = k + (\epsilon+\frac{x-k}{k})k = x+k\epsilon$, and has a variance of 
$\sigma_Z^2=k(\epsilon+g(x))(1-\epsilon-g(x))  =k(\epsilon-\epsilon^2\pm o(1)) $, which gives us by Theorem~\ref{thm:discrete} that 
\begin{equation}
\lim_{k\rightarrow\infty}\delta_{TV}(D,\norm{\mu_Z}{\sigma_Z^2}) =0 \label{eq:dis2}.
\end{equation}

We also know by Theorem~\ref{thm:distance}  since $\mu_Y=\mu_Z$, and by possible range of $\sigma_Z^2$ that 
\begin{eqnarray}
    \delta_{TV}(\norm{\mu_Y}{\sigma_Y^2},\norm{\mu_Z}{\sigma_Z^2}) & \leq &  O\left(\left| \frac{\sigma_Y^2}{\sigma_Z^2}-1\right| \right) \nonumber \\  &  \leq & O\left(\max\left\{\left|\frac{k\epsilon(1-\epsilon)}{k(\epsilon-\epsilon^2+o(1))} -1 \right|, \left| \frac{k\epsilon(1-\epsilon)}{k(\epsilon-\epsilon^2-o(1))} -1\right|\right\}\right) =o(1). \label{eq:disnorm}
\end{eqnarray}

Combining Equations~\eqref{eq:dis1}, \eqref{eq:dis2} and \eqref{eq:disnorm} and by the triangle inequality of $\delta_{TV}$ we get that 
$$ \sum_{i=0}^{4k} |\Pr[Y_x=i]-\Pr[Z_{g(x)}=i]| =o(1),$$
which by combining with Equation~\eqref{eq:bound} we get that $$   \delta_{TV}(D,D^*) \leq  o(1) + \sum_{x=k-k^{2/3}}^{k+k^{2/3}}\Pr[X=x] \cdot o(1) = o(1), $$
which concludes the proof.
\end{proof}

\begin{proof}[Proof of Theorem~\ref{thm:hard}]

By Lemmas~\ref{lem:1}, \ref{lem:dist2} and \ref{lem:dd}, $q_{p^*}(t_1) \leq \delta_1+o(1)$, and for $i=2,3,4$, by Lemmas~\ref{lem:2}, \ref{lem:dist2} and \ref{lem:dd}, it holds that $q_{p^*}(t_i) \leq \delta_2+o(1)$. Thus, for every $i$, after not selecting a value on $t_i$, if no non-zero value appears, the algorithm, selects nothing. Let $h(v)= v_2+v_3+v_4$. We bound the performance of the algorithm by partitioning into three types of online inputs depending on the values of $h(v)$: 1) $h(v)=0$ which happens with probability $\frac{8}{27}$ for which the algorithm can get at most $\xi\delta_1 +\epsilon$; 2) $h(v)=1$ which happens with probability $\frac{12}{27}$ for which the algorithm selects a value among them with probability at most $\delta_2$ and the algorithm gets at most $\delta_2+\epsilon$; 3) $h(v)>1$ which happens with a probability of $\frac{7}{27}$, in which the algorithm gets at most $1$. 
Thus, 
 \begin{eqnarray*}
     \E_{S \sim \SET{F^{p^*}}{k} }\left[\E_{v\sim F^{p^*}} [\alg(S,v)]\right]  & = & 
      \E[\alg(S,v) \mid h(v)=0] \cdot \Pr[h(v)=0] \\ & + & 
      \E[\alg(S,v) \mid h(v)=1] \cdot \Pr[h(v)=1] \\ & + & \E[\alg(S,v) \mid h(v)>1] \cdot \Pr[ h(v)>1]  \\  & \leq &
      (\xi \delta_1 +\epsilon)\cdot \frac{8}{27} + (\delta_2+\epsilon) \cdot \frac{12}{27} + \frac{7}{27} +o(1). 
 \end{eqnarray*}

 Overall, by setting $\xi=0.9,\delta_1=0.01,\delta_2=0.5005,\epsilon = 0.0001$ we get that for $$c > 0.4997 =\max\left(\xi\delta_1+\delta_2-\delta_1+2\epsilon,1-\delta_2,\xi\cdot \delta_1 \cdot \frac{8}{27} +  \delta_2 \cdot \frac{12}{27} + \frac{7}{27}+\epsilon\right), $$
no algorithm can guarantee a $\alpha_k(\alg)\geq c$ which concludes the proof.
\end{proof}

\section{Ordinal Static Threshold Algorithms}
\label{sec:ordinal}
In this section, we provide an algorithm that guarantees a constant competitive-ratio.
In particular, we consider the class of ordinal static threshold algorithms where the algorithm selects the $i$-th highest sample (independent of the samples' values) as a threshold, and then selects the first value that exceeds it.
We know that for the case where $k=1$, the single-threshold algorithm that selects the maximum sample, i.e., selecting $i=1$ (analyzed by \citet{rubinstein2020optimal}) provides a competitive-ratio of $\alpha_1(\alg) = \frac{1}{2}$.
Let $F_i$  be the CDF of box $i$. Let $F=\prod_{i} F_i$, be the product of the CDF's (which is the CDF of the maximum).

We next show that when $k \rightarrow \infty$, the optimal ordinal static threshold algorithm, uses the $\ell \approx \rho \cdot k$ highest sample as a threshold, and achieves a competitive-ratio of $\lim_{k\rightarrow\infty } \alpha_k(\alg) = 1-\rho$ where $\rho\approx 0.567 $ is the solution to the equation \begin{equation}
    xe^x=1. \label{eq:rho}
\end{equation}
\begin{theorem}\label{thm:ob-pos}
    The single-threshold algorithm $\alg$ that selects the $\ell$-th highest sample as a threshold for $\ell=\rho \cdot k -k^{2/3}$ guarantees a competitive-ratio  of $$\alpha_k(\alg) \geq 1-\rho -o(1).$$
\end{theorem}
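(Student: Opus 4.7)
The plan is to combine three ingredients: (i) a stochastic dominance bound showing that any static-threshold algorithm with threshold $T$ independent of the values $\min(F(T),1-F(T))$-stochastically dominates the prophet; (ii) the interval constraint $F(T) \in [1-\delta(T), e^{-\delta(T)}]$, where $\delta(T) := \sum_{i=1}^n (1-F_i(T))$ is the expected number of rewards exceeding $T$; and (iii) a concentration argument showing that the random threshold $T$ (the $\ell$-th highest sample) satisfies $\delta(T) \approx \rho$ with high probability over $S$.

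For (i), I would fix a realization of $T$ and split on $x$. For $x \leq T$, whenever $\alg$ selects, it picks a value $\geq T \geq x$, so $\Pr[\alg \geq x \mid T] = 1-F(T) \geq (1-F(T))\Pr[\max_i v_i \geq x]$. For $x > T$, decomposing by the accepted index and using $\prod_{j<i} F_j(T) \geq F(T)$ together with the reverse union bound $\sum_i (1-F_i(x)) \geq 1 - \prod_i F_i(x)$,
$$\Pr[\alg \geq x \mid T] = \sum_i \prod_{j<i} F_j(T)(1-F_i(x)) \geq F(T)(1-F(x)).$$
Applying \eqref{eq:stochastic-dominance} then yields $\E[\alg \mid T] \geq \min(F(T),1-F(T)) \cdot \E[\max_i v_i]$. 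For (ii), from $1-x \leq e^{-x}$ we get $F(T) = \prod_i F_i(T) \leq e^{-\delta(T)}$, and by union bound $F(T) = 1 - \Pr[\max_i v_i > T] \geq 1 - \sum_i \Pr[v_i > T] = 1 - \delta(T)$.

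For (iii), pick $t_- \leq t_+$ with $\delta(t_-) = \rho$ and $\delta(t_+) = \rho - 2k^{-1/3}$, which exist by continuity of the $F_i$'s. The number $N(t)$ of samples strictly above $t$ is a sum of independent $\bin{k}{1-F_i(t)}$ variables with mean $k\delta(t)$, so the Chernoff bound gives $\Pr[|N(t) - k\delta(t)| \geq k^{2/3}/2] \leq 2\exp(-\Omega(k^{1/3})) = o(1)$, applied at each of $t_-$ and $t_+$. On the intersection of these two events, $N(t_-) \geq k\rho - k^{2/3}/2 \geq \ell$ and $N(t_+) \leq k\rho - 3k^{2/3}/2 \leq \ell$, which pins down $t_- \leq T \leq t_+$ and hence $\delta(T) \in [\rho - 2k^{-1/3}, \rho]$. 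Combining with (ii), on this good event $F(T) \geq 1-\rho$ and $F(T) \leq e^{-\rho + 2k^{-1/3}} = \rho(1+o(1))$, using $\rho e^{\rho} = 1$ from \eqref{eq:rho}, so $\min(F(T),1-F(T)) \geq 1-\rho - o(1)$.

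Averaging over $S$ and bounding the $o(1)$-probability bad event trivially by $0$ on the numerator yields $\alpha_k(\alg) \geq 1-\rho - o(1)$. The main obstacle I anticipate is the concentration step: the deviation $k^{2/3}$ must simultaneously dominate the natural $\sqrt{k}$ fluctuations of $N(t)$ and remain small enough that $\delta(T)$ stays within $o(1)$ of $\rho$. Both requirements are comfortably met because $\rho$ is a constant strictly between $0$ and $1$, so the Chernoff exponent is $\Omega(k^{1/3})$ uniformly.
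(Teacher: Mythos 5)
Your proposal is correct and follows essentially the same route as the paper's proof: the same three ingredients appear there as Claims 1--3 (the $\min\{F(T),1-F(T)\}$-stochastic dominance of any static threshold, the bound $1-g(T)\leq F(T)\leq e^{-g(T)}$, and a Chernoff argument that $g(T)=\rho\pm o(1)$ with high probability), combined via Equation~\eqref{eq:stochastic-dominance}. Your only deviation is the cosmetic choice of anchor thresholds ($\delta=\rho$ and $\rho-2k^{-1/3}$ rather than $\rho\pm k^{-1/3}$), which handles the $-k^{2/3}$ offset in $\ell$ equally well.
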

\begin{proof}
Let $\alg_T$ be the single-threshold algorithm with threshold $T$.
Our proof is divided into three claims:
\begin{enumerate}
    \item Algorithm $\alg_T$ $h(T)$-stochastically dominates the prophet for  $h(T):= \min\{F(T),1-F(T)\}$.\label{item:1} 
    \item For $g(T):= \sum_{i}\Pr[v_i> T] $, it holds that $  1- g(T) \leq F(T)   \leq e^{-g(T)}  $. \label{item:2}
    \item Almost surely $g(T)=\rho\pm o(1)$. \label{item:3}
\end{enumerate}
Combining Claims~\ref{item:1}, \ref{item:2}, \ref{item:3} with Equation~\eqref{eq:stochastic-dominance} concludes the proof since $g(T) = \rho\pm o(1)$ implies that $1-\rho-o(1) \leq 1-g(T) \leq F(T)\leq e^{-g(T)} =\rho+o(1)$ which implies that $h(T)\geq  1-\rho-o(1)$.

\paragraph{Proof of Claim~\ref{item:1}.} 
We need to show that for every value $x$, it holds that: 
 \begin{equation}
    \Pr[\alg_T(v) \geq x] \geq h(T) \cdot \Pr[\max_i v_i \geq x]. \label{eq:stochastic-gen}
\end{equation}
To show Inequality~\eqref{eq:stochastic-gen}, we consider two cases depending on whether $x\leq T$, or $x>T$.
If $x\leq T$ then $$ \Pr[\alg_T(v)\geq x] = [\alg_T(v)\geq T] = 1-F(T)\geq (1-F(T))\cdot\Pr[\max_i v_i \geq x].$$
If $x>T$
then \begin{equation*} 
\Pr[\alg_T(v)\geq x]   =   \sum_{i}\Pr[v_i\geq x]\prod_{j<i} \Pr[v_i\leq T] \geq \sum_{i}\Pr[v_i\geq x]\cdot F(T) \geq F(T)\cdot\Pr[\max_i v_i \geq x],\end{equation*}
which concludes the proof of the claim.

\paragraph{Proof of Claim~\ref{item:2}.}
The left side of the inequality is equivalent to $$ 1-\prod_i F_i(T)\leq \sum_i (1-F_i(T),$$
which holds by the union bound.

The right side of the inequality is equivalent to $$ \prod_i F_i(T) \leq \prod_i e^{-(1-F_i(T))},$$
which holds by applying to each $F_i(T)$ the inequality $ z \leq e^{-(1-z)}$  that holds for all $z\in [0,1]$.

\paragraph{Proof of Claim~\ref{item:3}.}
In order for $g(T) >\rho+k^{-1/3}$, there should be at most $\rho k$ samples above $T^+= g^{-1}(\rho+k^{-1/3}) $. If we denote by $q_i^+=\Pr[v_i > T_+]$, then this is equivalent to $$ \Pr[\sum_{i} \bin{k}{q_i^+} \leq \rho k]  \leq  2e^{-k^{1/3}/3} =o(1) .$$
Similarly, in order for $g(T) <\rho+k^{-1/3}$, there should be at least $\rho k$ samples above $T^-= g^{-1}(\rho-k^{-1/3}) $. If we denote by $q_i^-=\Pr[v_i > T_-]$, then this is equivalent to $$ \Pr[\sum_{i} \bin{k}{q_i^-} \geq \rho k]  \leq  2e^{-k^{1/3}/3} =o(1) .$$
This concludes the proof.
\end{proof}

We complement Theorem~\ref{thm:ob-pos} by showing that no ordinal static threshold algorithm guarantees a better competitive-ratio.
\begin{theorem}\label{thm:ob-neg}
    For every $\ell\in\{1,\ldots,kn\}$ the single-threshold algorithm $\alg$ that selects the $\ell$-th highest sample as a threshold satisfies that  $$\alpha_k(\alg) \leq 1-\rho +o(1).$$
\end{theorem}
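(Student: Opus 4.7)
The plan is to prove the bound $\alpha_k(\alg) \leq 1-\rho+o(1)$ for an arbitrary rank $\ell$ by exhibiting, separately in the two regimes $\ell/k \leq \rho$ and $\ell/k > \rho$, a family of instances on which the ratio $\E[\alg(S,v)]/\E[\max_i v_i]$ tends to at most $1-\rho$. Writing $\lambda := \ell/k$, the two cases correspond to the two scenarios (``threshold value too high'' and ``threshold value too low'') mentioned in the introduction.

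For $\lambda \leq \rho$, I would take $n$ i.i.d.\ $U(0,1)$ boxes and let both $n,k \to \infty$. A Chernoff bound on the sample order statistics (analogous to Claim~\ref{item:3} in the proof of Theorem~\ref{thm:ob-pos}) shows that the $\ell$-th highest of the $nk$ samples concentrates at $T=1-\lambda/n$, so that $F(T)=T^n \to e^{-\lambda}$. Since the algorithm picks the first $v_i>T$, a direct computation gives $\E[\alg] = \tfrac{1+T}{2}(1-T^n) \to 1-e^{-\lambda}$, while $\E[\max_i v_i] = n/(n+1) \to 1$. Using $\rho e^{\rho}=1$ from \eqref{eq:rho}, the resulting ratio $1-e^{-\lambda}$ is at most $1-e^{-\rho}=1-\rho$.

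For $\lambda > \rho$, I would use a ``hidden high reward'' instance with $n=\max(2,\lceil\lambda\rceil)$ boxes. Let $v_1,\ldots,v_{n-1}$ be i.i.d.\ with $\Pr[v_i=1]=p$ and $\Pr[v_i=0]=1-p$, where $p$ is chosen so that $(1-p)^{n-1}=1-\rho$; and let $v_n=L$ with probability $p_n$ and $v_n=0$ otherwise, with parameters $p_n\to 0$, $L\to\infty$, and $p_n L\to\infty$. Since $(n-1)p+p_n<\lambda$ for $p_n$ small, a Chernoff bound gives that the $\ell$-th highest sample equals $0$ with probability $1-o(1)$. Conditioned on that event, the algorithm picks the first $v_i>0$: with probability $1-(1-p)^{n-1}=\rho$ it stops at some $v_i=1$ among the first $n-1$ boxes (value $1$), and with the remaining probability $1-\rho$ it reaches $v_n$ and collects $L$ with probability $p_n$. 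Hence $\E[\alg] \to \rho+(1-\rho)p_n L$ and $\E[\max_i v_i] \to p_n L+\rho$, so the ratio tends to $1-\rho$ as $p_n L\to\infty$.

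The main technical obstacle is to make the threshold-concentration statements rigorous: one must bound the probability that the $\ell$-th highest sample deviates from its typical value, and absorb the worst-case contribution of that rare event into the $o(1)$ error. This parallels the Chernoff argument in Claim~\ref{item:3} of Theorem~\ref{thm:ob-pos}; in the hidden-reward case it additionally requires choosing $L$ and $p_n$ (for instance $L=k^2$ and $p_n=1/k$) so that $L\cdot\Pr[T\neq 0]$ vanishes relative to $\E[\max_i v_i]$. A secondary technicality is the preliminary convention that the constructed instances be continuous; this is enforced by replacing each point mass with a uniform on a vanishingly small interval, without affecting any of the limiting ratios.
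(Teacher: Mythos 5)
Your overall strategy is the same as the paper's: split according to whether the rank $\ell/k$ is above or below $\rho$, use a near-i.i.d.\ instance (so that the only loss is the non-acceptance probability, which tends to $1-e^{-\ell/k}\le 1-\rho$) in the shallow-rank regime, and a rare hidden prize that the algorithm misses with probability about $\rho$ in the deep-rank regime, with Chernoff-type concentration of the $\ell$-th order statistic in both cases. The $\lambda\le\rho$ half of your argument is sound (modulo the routine care you already flag for very small $\lambda$, where the acceptance probability is itself $o(1)$).

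There is, however, a genuine gap in the $\lambda>\rho$ half at the boundary. Your key step ``since $(n-1)p+p_n<\lambda$ for $p_n$ small, a Chernoff bound gives that the $\ell$-th highest sample equals $0$ with probability $1-o(1)$'' requires the margin $\lambda k-(n-1)pk=(\lambda-\rho)k$ (for $n=2$) to be large compared with the binomial fluctuation scale $\Theta(\sqrt{k})$. The theorem quantifies over every $\ell$, so you must also handle, e.g., $\ell=\lceil\rho k\rceil+1$, where $\lambda-\rho=O(1/k)$: then the number of nonzero samples exceeds $\ell$ with probability bounded away from $0$, the threshold sits at (a perturbed) $1$ rather than at $0$ on that event, the algorithm skips the unit-value boxes and reaches the hidden prize too often, and the certified ratio degrades to roughly $\tfrac12(1-\rho)+\tfrac12>1-\rho$. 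So as written the construction does not certify the bound in a strip of ranks around $\rho k$. The paper sidesteps exactly this by making the first box continuous ($1+U(0,1)$): the threshold estimate $T\le 2-\rho+k^{-1/3}$ is robust at $\ell\approx\rho k$, since shifting the threshold slightly only shifts the acceptance probability of $v_1$ by $k^{-1/3}$. Your proof is easily repaired in the same spirit: either replace the two-point boxes by continuous ones, or move the case boundary to $\rho+k^{-1/3}$ and observe that your uniform-boxes instance already gives $1-e^{-\lambda}+o(1)\le 1-\rho+O(k^{-1/3})$ throughout $\lambda\le\rho+k^{-1/3}$, leaving a margin of $k^{2/3}$ for the Chernoff step in the remaining range. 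A second, minor inaccuracy: smoothing the point masses does change the limiting behavior when $\lambda$ is well above $\rho$ (a fresh smoothed zero can exceed a smoothed-zero threshold, reducing the pass-through probability below $1-\rho$), but this only lowers $\E[\alg(S,v)]$ and therefore does not hurt the claimed inequality.
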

\begin{proof}
    We consider two cases:
\paragraph{Case 1:}    If $\ell\geq \rho k$, then consider the scenarios where the value of the first box is $1 + U(0,1)$, and the second box has a value of $k^3\cdot  \ber{\frac{1}{k^2}} + U(0,1) $.  It holds that $\E[\max_i v_i] \geq \E[v_2] \geq k $.
    On the other hand $$\Pr[T(S)>2-\rho+k^{-1/3}] =\Pr[\bin{k}{\rho-k^{-1/3} } +\bin{k}{\frac{1}{k^2}} \geq \ell] = o(1) .$$
    Thus, 
    \begin{eqnarray}
    \E[\alg(S,v)] & = & \Pr[T(S)>2-\rho+k^{-1/3}] \cdot \E[\alg(S,v) \mid T(S)  >2-\rho+k^{-1/3}] \nonumber\\  & +&   \Pr[T(S)\leq 2-\rho+k^{-1/3}] \cdot \E[\alg(S,v) \mid T(S)  \leq 2-\rho+k^{-1/3}] \nonumber \\ & \leq & o(1) \cdot (k+1) + (1-o(1)) \cdot (2+(1-\rho+k^{-1/3})(k+1)) = k(1-\rho+o(1)), \nonumber
    \end{eqnarray}
    where the inequality is since the expected value of the algorithm is at most the expected value of the optimal online algorithm that always selects $v_2$ that achieves an expected value of at most $k+1$, and since if the chosen threshold is at most $2-\rho+k^{-1/3}$ then the algorithm achieves at most the value of $v_1$ which is bounded by $2$ plus the expected value of $v_2$ times the probability that $v_2$ is reached which is at most $1-\rho+k^{-1/3}$.

\paragraph{Case 2:}  If $\ell\leq \rho k$, then consider the scenario where there are $n$ boxes where all of their values are distributed as $k + U(0,1)$. We assume that $n\rightarrow\infty$, but $n < k^{1/4}$.
 It holds that $\E[\max_i v_i] \geq k $.
    On the other hand $$ \Pr[T(S)  < k+ 1  -\frac{ \rho+k^{-1/3}}{n} ] = \Pr[\bin{kn}{ \frac{\rho+ k^{-1/3}}{n}} \leq \ell ] = o(1) .$$
Thus, 
    \begin{eqnarray}
    \E[\alg(S,v)] & = & \Pr[T(S)<k+ 1 - \frac{\rho+k^{-1/3}}{n}] \cdot \E[\alg(S,v) \mid T(S)  <k+ 1 - \frac{\rho+k^{-1/3}}{n}] \nonumber\\  & +&   \Pr[T(S)\geq k+ 1- \frac{\rho+ k^{-1/3}}{n}] \cdot \E[\alg(S,v) \mid T(S)  \geq k+ 1- \frac{\rho+k^{-1/3}}{n}] \nonumber \\ & \leq & o(1) \cdot (k+1) + (1-o(1)) \cdot \Pr[\bin{n}{ \frac{\rho+k^{-1/3}}{n}} >0] \cdot (k+1) \nonumber \\ & = & k\cdot (1-\Pr[\bin{n}{ \frac{\rho +k^{-1/3}}{n}} =0]) +o(1) = k\cdot (1-\left(1-\frac{\rho+k^{-1/3}}{n}\right)^{n}) +o(1) \nonumber \\ & = & k (1-e^{-\rho+o(1)}) +o(1)= k(1-\rho+o(1)), \nonumber
    \end{eqnarray}
        where the inequality is since the expected value of the algorithm is at most  $(k+1)$ times the probability of selecting some value, and since if the chosen threshold is at least $k+1-\frac{\rho+k^{-1/3}}{n}$ then the probability that $v_i>T$ is bounded by $\frac{\rho+k^{-1/3}}{n}$.
\end{proof}
\newpage
\section{Simpler Analysis for Single-Sample Prophet Inequality}
\label{sec:simpler}
In this section, we provide an alternative proof for the result of \cite{rubinstein2020optimal} that analyzes the case where $k=1$, i.e., when there is only one sample from each distribution. 
Consider the algorithm $\alg$ that sets the maximum sample as a threshold.
Let $F_i$  be the CDF of box $i$. Let $F=\prod_{i} F_i$, be the product of the CDF's (which is the CDF of the maximum), and let $f$ be the corresponding PDF of the maximum.
We show that $\alg$ $\frac{1}{2}$-stochastically dominates the prophet, i.e., for every value $x \in \mathbb{R}_{\geq 0}$, it holds that \begin{equation}
    \Pr[\alg(S,v) \geq x] \geq \frac{1}{2}\Pr[\max_i v_i \geq x]. \label{eq:stochastic}
\end{equation} 
Equation~\eqref{eq:stochastic} together with Equation~\eqref{eq:stochastic-dominance} imply that $\alpha_1(\alg) \geq \frac{1}{2}$. 
\begin{proof}[\textbf{Proof of Inequality~\eqref{eq:stochastic}}] We first observe that:
\begin{equation}
    \Pr[\max_i v_i \geq x] = 1-F(x).\label{eq:opt-single}
\end{equation} 
On the other hand, if the maximum sample which we denote by $s$, satisfies that   $s\leq x$, then:
\begin{eqnarray}
     \Pr[\alg(S,v)\geq x \mid \max(S)=s] & =& \sum_{i=1}^n (1-F_i(x))
\underbrace{\left(\prod_{j<i}F_{j}(s)\right)}_{\text{Reaching } i}  \nonumber\\ & = &
\sum_{i=1}^n (1-F_i(x))
\left(\prod_{j<i}F_{j}(x) \cdot \frac{F_j(s)}{F_j(x)}\right)  \nonumber\\ & \geq &
\frac{F(s)}{F(x)}\sum_{i=1}^n (1-F_i(x)) 
\left(\prod_{j<i}F_{j}(x) \right)    = \frac{F(s)}{F(x)} (1-F(x)),
\label{eq:b1}
\end{eqnarray}
where the inequality is since we multiplied each term by $\prod_{j \geq i}\frac{F_j(s)}{F_j(x)}$ which is at most $1$ since $s\leq x$.

If the maximum sample satisfies that $s>x$, then:
\begin{eqnarray}
\Pr[\alg(S,v)\geq x \mid \max(S)=s] & =& \sum_{i=1}^n (1-F_i(s))
\left(\prod_{j<i}F_{j}(s)\right)  =  (1-F(s))\label{eq:b2}.
\end{eqnarray}
Overall,
\begin{eqnarray}
    \Pr[\alg(S,v) \geq x] & = & \int_{0}^\infty f(s) \cdot    \Pr[\alg(S,v)\geq x \mid \max(S)=s] ds  %\nonumber \\ &+&   \int_{x}^\infty f(s) \cdot    \Pr[\alg(S,v)\geq x \mid \max(S)=s]ds 
    \nonumber \\ & \geq & \int_{0}^x f(s) \cdot \frac{F(s)}{F(x)}(1-F(x)) ds +   \int_{x}^\infty f(s) (1-F(s))ds 
    \nonumber \\ &=& \frac{1-F(x)}{F(x)} \left[ \frac{1}{2}F(s)^2\right]_{0}^x + \left[ F(s)-\frac{1}{2}F(s)^2\right]_{x}^\infty \nonumber \\ & = & \frac{1-F(x)}{F(x)}\cdot  \frac{F(x)^2}{2} + 1-\frac{1}{2}-F(x) + \frac{1}{2}F(x)^2 = \frac{1-F(x)}{2}, \nonumber
\end{eqnarray}
where the first equality is by integrating over the maximum sample; the first inequality is by partitioning to values of $s$ that are smaller and larger than $x$ and by Equations~\eqref{eq:b1} and \eqref{eq:b2}; the second equality is since $\int f(s) ds =F(s) $ and  $\int f(s) F(s) ds  = \frac{1}{2}F(s)^2$.

Combining with Inequality~\eqref{eq:opt-single} concludes the proof.
\end{proof}
\bibliographystyle{abbrvnat}
\bibliography{bib}

\end{document}